\newif\ifdraft\draftfalse
\newif\ifappendix\appendixtrue
\newif\ifdesperateforspace\desperateforspacetrue
\newif\ifaftersubmission\aftersubmissionfalse
\definecolor{dkred}{rgb}{0.7,0,0}
\definecolor{dkpurple}{HTML}{4e02eb}
\definecolor{dkgreen}{HTML}{006329}
\definecolor{teal}{HTML}{007982}
\definecolor{string}{HTML}{02782d}
\definecolor{Fuchsia}{HTML}{8C368C}
\definecolor{knowledgelink}{HTML}{000077}
\newcommand{\comm}[3]{\ifdraft\textcolor{#1}{[#2: #3]}\fi}
\newcommand{\bcp}[1]{\comm{dkpurple}{BCP}{#1}}
\newcommand{\hg}[1]{\comm{dkred}{HG}{#1}}
\newcommand{\cc}[1]{{\color{string} \tt #1}}
\newcommand{\bench}[1]{\textbf{#1}}
\newenvironment{proofsketch}{%
  \proof}{\endproof}
\theoremstyle{remark}
\theoremstyle{theorem}
\newenvironment{repeat-theorem}[1]{%
  \manualtheoreminner{}
}{\endmanualtheoreminner}
\theoremstyle{theorem}
\newenvironment{repeat-lemma}[1]{%
  \manuallemmainner{}
}{\endmanuallemmainner}
\newcommand{\SUBSECTION}[1]{%
  \ifdesperateforspace \bigskip \noindent {\bf #1.}
  \else\subsection{#1}
  \fi}
\newcommand{\SUBSUBSECTION}[1]{%
  \ifdesperateforspace \bigskip \noindent {\it #1.}
  \else\subsection{#1}
  \fi}
\newrobustcmd{\deriv}{\delta}
\newrobustcmd{\genderiv}{\kl[generator derivative]{\delta}}
\newrobustcmd{\nullable}{\nu}
\newrobustcmd{\gennullable}{\kl(GEN)[nullability]{\nu}}
\newrobustcmd{\lang}[1]{\kl[language of a free generator]{\mathcal{L}}\llbracket{} #1 \rrbracket{}}
\newrobustcmd{\parser}[1]{\kl[parser interpretation]{\mathcal{P}}\llbracket{} #1 \rrbracket{}}
\newrobustcmd{\choiceDist}[1]{\kl[choice distribution]{\mathcal{C}}\llbracket{} #1 \rrbracket{}}
\newrobustcmd{\valueDist}[1]{\kl[generator interpretation]{\mathcal{G}}\llbracket{} #1 \rrbracket{}}
\newrobustcmd{\modValueDist}[1]{\kl[generator interpretation]{\overline{\mathcal{G}}}\llbracket{} #1 \rrbracket{}}
\newrobustcmd{\fullalgo}{{\sc Choice Gradient Sampling}}
\newrobustcmd{\shortalgo}{{\sc CGS}}
\DeclareMathOperator{\fmap}{\kl[functor]{\langle{} \${} \rangle{}}}
\DeclareMathOperator{\ap}{\kl[applicative]{\langle{}\!\!\ast{}\!\!\rangle{}}}
\DeclareMathOperator{\gequiv}{\kl[generator equivalence]{\equiv}}
\begin{document}

\title[Parsing Randomness]{Parsing Randomness}
\subtitle{Unifying and Differentiating Parsers and
  Random Generators}
\titlenote{This paper was originally submitted to PLDI'22 and not accepted. The
  reviewers felt that the ideas were good, but that the presentation was
  incomplete.}


\author{Harrison Goldstein}
\orcid{0000-0001-9631-1169}             
\affiliation{
  \institution{University of Pennsylvania}            
  \country{Philadelphia, PA, USA}                    
}
\email{hgo@seas.upenn.edu}          

\author{Benjamin C. Pierce}
\orcid{0000-0001-7839-1636}             
\affiliation{
  \institution{University of Pennsylvania}           
  \country{Philadelphia, PA, USA}                   
}
\email{bcpierce@cis.upenn.edu}         

\begin{abstract}
  ``A generator is a parser of randomness.'' This
  perspective on generators for random data structures is
  well established as folklore
  in the programming languages community, but
  it has apparently never been formalized, nor
  have its consequences been deeply explored.

  We present {\em free generators}, which unify
  parsing and generation using a common structure that
  makes the relationship between the two concepts
  precise. Free generators lead naturally to a
  proof that a large class of generators can be factored into a
  parser plus a distribution over choice sequences.
  Further, free generators support a notion of
  {\em derivative}, analogous to familiar Brzozowski derivatives of
    formal languages, that allows analysis tools to
  ``preview'' the effect
  of a particular generator choice. This, in turn,
  gives rise to a novel algorithm for generating data structures
  satisfying user-specified preconditions.
\end{abstract}

\keywords{Random generation, Parsing,
  Property-based testing, Formal languages}

\maketitle

\section{Introduction}\label{sec:intro}
``A generator is a parser of randomness\dots''  It's one of those observations
that's totally puzzling right up to the moment it becomes totally obvious:
a random generator---such as might be found in
a property-based
testing tool like QuickCheck~\cite{DBLP:conf/icfp/ClaessenH00}---is a
transformer from a series of random
choices into a data
structure, just as a parser transforms a series of
characters into a data structure.

Although this connection may be obvious once it is pointed out, few actually
think of generators this way. Indeed, to our
knowledge the framing of random generators as
parsers has never been explored
formally.
But this is a shame! The relationship between these fundamental
concepts deserves a deeper look.


A generator is a program that builds a data
structure by making a sequence of random
choices---those choices are the key. A ``traditional''
generator makes decisions using a stored source of
randomness (e.g., a seed) that it consults and updates whenever it must make a
choice.  Equivalently, if we like, we can pre-compute a list of choices
and pass it in to the generator, which gradually walks down the list whenever
it needs to make random decisions.
In this mode of operation, the
generator is effectively  {\em parsing\/} the sequence of choices
into a data structure!

%

\ifaftersubmission \bcp{We've got an awful lot of italics all over the place.
  Sometime let's see if we can reduce it to what we really need.} \fi

\ifaftersubmission\bcp{This reader wonders: Is this the only or best way
  to make a precise connection between the two?  (I.e., do any of your
  theorems establish a ``direct'' connection that doesn't mention free
  generators?)}\hg{No, not really. Our formalism
  suggests some broader theorems, but we can't
  really claim that we capture everything}\fi

To connect generators and parsers,
we introduce a data structure called a {\em free
generator\/} that can be interpreted as {\em either\/} a
generator or as a parser.  Free generators have a rich theory; in particular, we
can use them to prove
that a subset of generator programs can be
factored into a parser and a distribution over
sequences of choices.

Besides clarifying folklore, free generators admit
transformations that cannot be implemented for
standard generators and parsers. A particularly exciting one
is a notion of {\em derivative\/}
which modifies a generator by asking the question:
``what would this generator look like after it
makes choice $c$?'' The derivative gives a way of
previewing a particular choice to determine how
likely it is to lead us to useful values.

We use derivatives of free generators to
tackle a well-known
problem---we call it the {\em valid
generation problem}. The challenge is to generate
a large number of random values that satisfy some
validity condition. This problem comes up often in
property-based testing, where the validity
condition is the precondition of some functional
specification. Since generator derivatives give a
way of previewing the effects of a particular
choice, we can use {\em gradients\/} (derivatives
with respect to a vector of choices) to preview all possible
choices and pick a promising one. This leads us to an elegant algorithm for
turning a na\"\i ve free generator into one that only generates valid values.

In \S\ref{sec:high-level} below, we introduce the ideas behind free generators and
the operations that can be defined on them.  We then present
our main contributions:
\begin{itemize}
\item We formalize the folklore analogy between parsers and generators
  using {\em free generators}, a novel class of structures that make choices
  explicit and support syntactic transformations (\S\ref{sec:free}).
  We use free generators to prove that every
    ``applicative'' generator can factored into
  a parser and a probability distribution.
\item We exploit free generators to to transport
  an idea from formal languages---the {\em
    Brzozowski derivative\/}---to the context of
  generators (\S\ref{sec:derivatives}).
\item To illustrate the potential applications of these formal results,
  we present an algorithm that uses
  derivatives to turn a na\"\i ve generator into
  one that produces only values satisfying a
  Boolean precondition (\S\ref{sec:grad}). Our
  algorithm performs well on simple
  benchmarks, in most cases producing more than
  twice as many valid values as a na\"\i ve
  ``rejection sampling'' generator in the same amount of time
  (\S\ref{sec:evaluation}).
\end{itemize}
We conclude with related and future work
(\S\ref{sec:related} and \S\ref{sec:future}).

\section{The High-Level Story}\label{sec:high-level}
Let's take a walk in the forest before we dissect the trees.

\SUBSECTION{Generators and Parsers}
Consider the generator \lstinline{genTree} in
Figure~\ref{fig:genTree-parseTree}, which
produces random binary trees of Booleans like
\begin{center}
  \lstinline{Node True Leaf Leaf} \quad and \\
  \lstinline{Node True Leaf (Node False Leaf Leaf)},
\end{center}
up to a given height $h$, guided by a series of
random coin {\sf flip}s.
\footnote{Program synthesis
  experts might wonder
why we represent generators as programs of this
form, rather than, for example, PCFGs. Our work
may very well translate to grammar-based generators, but we
chose to target ``applicative'' generator programs because
they are more expressive and more familiar for
{\sc QuickCheck}-style testing.}
\ifaftersubmission{}
\hg{Can we move this?}
\fi

\begin{figure}[h]
  \begin{algorithmic}
    \Hask{$\textsf{genTree}$}{$h$}
      \If{$h = 0$}
        \State{$\textbf{return}~\textsf{Leaf}$}
      \Else{}
        \State{$c \gets \textsf{flip}()$}
        \If{$c == \textsf{Heads}$} $\textbf{return}~\textsf{Leaf}$ \EndIf{}
        \If{$c == \textsf{Tails}$}
          \State{$c \gets \textsf{flip}()$}
          \If{$c == \textsf{Heads}$} $x \gets \textsf{True}$ \EndIf{}
          \If{$c == \textsf{Tails}$} $x \gets \textsf{False}$ \EndIf{}
          \State{$l \gets \textsf{genTree}~(h - 1)$}
          \State{$r \gets \textsf{genTree}~(h - 1)$}
          \State{$\textbf{return}~\textsf{Node}~x~l~r$}
        \EndIf{}
      \EndIf{}
    \EndHask{}
  \end{algorithmic}
  \begin{algorithmic}
    \Hask{$\textsf{parseTree}$}{$h$}
    \If{$h = 0$}
      \State{$\textbf{return}~\textsf{Leaf}$}
    \Else{}
      \State{$c \gets \textsf{consume}()$}
        \If{$c == \cc{l}$} $\textbf{return}~\textsf{Leaf}$ \EndIf{}
        \If{$c == \cc{n}$}
          \State{$c \gets \textsf{consume}()$}
          \If{$c == \cc{t}$} $x \gets \textsf{True}$ \EndIf{}
          \If{$c == \cc{f}$} $x \gets \textsf{False}$
          \Else{} $\textbf{fail}$
          \EndIf{}
          \State{$l \gets \textsf{parseTree}~(h - 1)$}
          \State{$r \gets \textsf{parseTree}~(h - 1)$}
          \State{$\textbf{return}~\textsf{Node}~x~l~r$}
        \Else{} $\textbf{fail}$
      \EndIf{}
    \EndIf{}
    \EndHask{}
  \end{algorithmic}
  \vspace{-4mm}
  \caption{A generator and a parser for Boolean binary trees. \bcp{We can save
      four lines here by removing carriage returns after then and else.  Also
      figure 2.}}\label{fig:genTree-parseTree}
\end{figure}

Now, consider \lstinline{parseTree} (also in
Figure~\ref{fig:genTree-parseTree}), which parses
a string over the characters
\cc{n}, \cc{l}, \cc{t}, and \cc{f} into a tree. The parser
turns
\begin{center}
  \cc{ntll} into \lstinline{Node True Leaf Leaf} \quad and \\
  \cc{ntlnfll} into \lstinline{Node True Leaf (Node False Leaf Leaf)}.
\end{center}
It consumes the input string character by character with {\sf
  consume} and uses the characters to decide what
to do next.


Obviously, there is considerable structural similarity between
\lstinline{genTree} and \lstinline{parseTree}.
One apparent difference lies in the way they make choices and the ``labels'' for
those choices:
in \lstinline{genTree}, choices are made randomly
during the execution of the program and are marked
by sides of a coin, while
in \lstinline{parseTree} the choices are made ahead
of time and manifest as the characters in the
input string. But this difference is rather superficial.

\SUBSECTION{Free Generators}
We can unify random generation with parsing by
abstracting both into a single data
structure. For this, we introduce \kl[free
generator]{free generators}.\footnote{This
document uses the {\tt knowledge} package in
\LaTeX{} to make definitions interactive. Readers
viewing the PDF electronically can click on
technical terms and symbols to see where they are
defined in the document.} Free generators are syntactic
structures (a bit like abstract syntax trees) that can be {\em interpreted\/} as
programs that either generate or parse.
Observe the structural similarities between {\sf fgenTree} and the programs in
Figure~\ref{fig:genTree-parseTree}.

\begin{figure}[h]
  \begin{algorithmic}
    \Hask{$\textsf{fgenTree}$}{$h$}
    \If{$h = 0$}
    \State{\lstinline{Pure Leaf}}
    \Else{}
\begin{lstlisting}
       Select
         [ ($\cc{l}$, Pure Leaf),
           ($\cc{n}$, MapR
             (Pair (Select
                     [ ($\cc{t}$, Pure True),
                       ($\cc{f}$, Pure False) ])
                   (Pair (fgenTree ($h$ - 1))
                         (fgenTree ($h$ - 1))))
             (\ (x, (l, r)) -> Node x l r)) ]
\end{lstlisting}
    \EndIf{}
    \EndHask{}
  \end{algorithmic}
  \vspace{-4mm}
  \caption{A free generator for binary trees of Booleans.\bcp{We can save a
      couple lines here by removing carriage returns after then and else.  Also,
  let's make sure this winds up on the same page as figure
  1.}}\label{fig:fgenTree}
\end{figure}

While the free generator \lstinline{fgenTree $h$}
is just a data structure, its shape is much the same
as {\sf genTree} and {\sf parseTree}.
A {\sf Pure} node in the free generator
corresponds roughly to \lstinline{return}; it
represents a pure value that makes no choices.
{\sf MapR} takes two
arguments, a free generator and a function that
will eventually be applied to the result of
generation / parsing. The {\sf Pair}
constructor maps to sequencing the original programs: it
generates / parses using its first argument, then
does the same with its second argument, and
finally pairs the results together. Finally---the real magic---lies in how we
interpret the
\lstinline{Select} structure. When we want a
generator, we treat it as making a uniform random
choice, and when we want a parser we treat it as
consuming a character $c$ and checking it
against the first elements of the pairs.

In \S\ref{sec:free} we
give formal definitions of free generators, along
with several interpretation functions. We write
$\valueDist{\cdot}$ for the \kl{generator
interpretation} of a free generator and
$\parser{\cdot}$ for the \kl{parser
interpretation}. In other
words,
\begin{center}
  $\valueDist{\textsf{fgenTree}~5}~\approx~\textsf{genTree}~5$ \quad and
  $\parser{\textsf{fgenTree}~5}~\approx~\textsf{parseTree}~5$.
\end{center}

Now let's consider how the generator and parser interpretations
relate. The key lies in one final interpretation function,
$\choiceDist{\cdot}$, which yields the \kl{choice distribution}.
Intuitively, the choice distribution interpretation produces the
set of
sequences of choices that the generator
interpretation can make, or equivalently the set of
sequences that the parser interpretation can
parse.

The choice distribution interpretation is used below in
Theorem~\ref{thm:coherence} to connect
parsing and generation. The theorem says
that for any free generator $g$,
\[ \parser{g} \fmap \choiceDist{g} \approx \valueDist{g} \]
where $\fmap$ is a ``mapping''
operation that applies a function to samples from
a distribution. Since many normal {\sc
QuickCheck} generators can also be written as free
generators, another way to read this theorem is
that such generators can be factored into two
pieces: a distribution over choice sequences
(given by $\choiceDist{\cdot}$), and a parser of
those sequences (given by $\parser{\cdot}$). This
precisely formalizes the intuition that ``A generator is a parser of
randomness.''

\SUBSECTION{Derivatives of Free Generators}
But wait, there's more!
Since a free
generator defines a parser, it also defines a formal
language: we write $\lang{\cdot}$ for this \kl{language
interpretation} of a free generator. The language of a free generator
is the set of choice sequences that it can parse
(or make).

Viewing free generators this way
suggests some interesting ways that free
generators might be manipulated.
In particular, formal languages come with a notion of {\em
derivative}, due to
Brzozowski~\cite{brzozowski1964derivatives}. Given a
language $L$, the Brzozowski derivative of $L$ is
\[
\deriv_{c}L = \{ s \mid c \cdot s \in L \} .
\]
That is, the derivative of $L$ with respect
to $c$ is all the strings in $L$ that start with $c$,
with the first $c$ removed.

%

Conceptually, the
derivative of a parser with respect to a character
$c$ is whatever parser remains after $c$ has
just been parsed.
For example, the derivative of
\lstinline{parseTree 5} with respect to \cc{n} is:
\bcp{delete the carriage return after the approx
  (also below)...}
\\\\
\begin{minipage}[c]{.95\textwidth}
  \begin{algorithmic}
    \HaskApprox{$\deriv_{\cc{n}}$}{$\textsf{parseTree}$}{5}
    \State{$c \gets \textsf{consume}()$}
    \If{$c == \cc{t}$} $x \gets \textsf{True}$ \EndIf{}
    \If{$c == \cc{f}$} $x \gets \textsf{False}$
    \Else{} $\textbf{fail}$
    \EndIf{}
    \State{$l \gets \textsf{parseTree}~4$}
    \State{$r \gets \textsf{parseTree}~4$}
    \State{$\textbf{return}~\textsf{Node}~x~l~r$}
    \EndHask{}
  \end{algorithmic}
\end{minipage}
\noindent After parsing the character
\cc{n}, the next step in the original parser is to parse either \cc{t} or
\cc{f} and then construct a {\sf Node}; the
derivative does just that.

Next let's take a derivative of the new parser
$\deriv_{\cc{n}}(\textsf{parseTree}~5)$---this time
with respect to \cc{t}: \\\\
\begin{minipage}[c]{.95\textwidth}
  \begin{algorithmic}
    \HaskApprox{$\deriv_{\cc{t}}\deriv_{\cc{n}}$}{$\textsf{parseTree}$}{5}
    \State{$l \gets \textsf{parseTree}~4$}
    \State{$r \gets \textsf{parseTree}~4$}
    \State{$\textbf{return}~\textsf{Node}~\textsf{True}~l~r$}
    \EndHask{}
  \end{algorithmic}
\end{minipage}
\noindent Now we have fixed the value {\sf True}
for $x$, and we can continue by making the
recursive calls to \lstinline{parseTree 4} and constructing the final tree.

Free generators have a closely related
notion of \kl(GEN){derivative}.
The
derivatives of the free
generator produced by {\sf fgenTree} look
almost identical to the ones that we saw above for {\sf
parseTree}: \\\\
\begin{minipage}[c]{.95\textwidth}
  \begin{algorithmic}
    \HaskApprox{$\genderiv_{\cc{n}}$}{$\textsf{fgenTree}$}{5}
\begin{lstlisting}
    MapR
      (Pair (Select
              [ ($\cc{t}$, Pure True),
                ($\cc{f}$, Pure False) ])
            (Pair (fgenTree 4)
                  (fgenTree 4)))
      (\ (x, (l, r)) -> Node x l r)
\end{lstlisting}
    \EndHask{}
  \end{algorithmic}
\end{minipage}

\begin{minipage}[c]{.95\textwidth}
  \begin{algorithmic}
    \HaskApprox{$\genderiv_{\cc{t}}\genderiv_{\cc{n}}$}{$\textsf{fgenTree}$}{5}
\begin{lstlisting}
    MapR
      (Pair (fgenTree 4)
            (fgenTree 4))
      (\ (l, r) -> Node True l r)
\end{lstlisting}
    \EndHask{}
  \end{algorithmic}
\end{minipage}

Moreover, like derivatives of regular
expressions and context-free grammars, derivatives of free generators can be
computed by a simple syntactic transformation.
In \S\ref{sec:derivatives} we define a
procedure for computing the derivative of a free
generator and prove it correct, in the sense that,
for all free generators $g$,
\[ \deriv_{c} \lang{g} = \lang{\genderiv_{c}g}. \]
In other words, the derivative of the language of
$g$ is equal to the language of the derivative of
$g$. (See Theorem~\ref{thm:deriv-consistent}.)

\SUBSECTION{Putting Free Generators to Work}
The derivative of a free generator is intuitively
{\em the generator that remains after a
particular choice}. This gives us a way of ``previewing''
the effect of making a choice by looking at the
generator after fixing that choice.

In \S\ref{sec:grad} and \S\ref{sec:evaluation} we
present and evaluate an algorithm called
\kl[cgs]{\fullalgo{}} that uses free generators to
address the {\em valid generation problem}. Given a
validity predicate on a data structure, the goal
is to generate as many unique, valid structures as
possible in a given amount of time.
Given a simple free generator, our algorithm uses
derivatives to evaluate choices and search for
valid values.

We evaluate our algorithm on four
small benchmarks, all standard in the
property-based testing literature. We compare our
algorithm to rejection sampling---sampling from a
na\"\i ve generator and discarding
invalid results---as a simple but useful
baseline for understanding how well or algorithm
performs. Our algorithm does remarkably well on
all but one benchmark, generating more than twice
as many valid values as rejection sampling in the
same period of time.

\section{Free Generators}\label{sec:free}
We now turn to developing the theory of {free generators}, beginning with some
background on applicative abstractions for
parsing and random generation.

\SUBSECTION{Background: Applicative Parsers and Generators}
In \S\ref{sec:high-level} we represented generators
and parsers with pseudo-code. Here we flesh out the details.
We present all definitions as {\sc Haskell} programs, both for
the sake of
concreteness and also because {\sc Haskell}'s
abstraction features (e.g., typeclasses) allow us to focus on the
key concepts.  {\sc Haskell} is a lazy functional language, but our
results are also applicable to eager functional
languages and imperative languages.

We represent both generators and parsers using
{\em applicative
  functors\/}~\cite{mcbride2008applicative}\footnote{For Haskell experts: we
  choose to focus on applicatives, not monads,
  to simplify our development and avoid some
  efficiency issues in \S\ref{sec:derivatives} and
  \S\ref{sec:grad}. Much of what we present
  should generalize to monadic generators as well.}
At a
high level, an applicative functor
is a type constructor \lstinline{f} with
operations:
\begin{lstlisting}
  (fmap) :: (a -> b) -> f a -> f b
  pure :: a -> f a
  (<*>) :: f (a -> b) -> f a -> f b
\end{lstlisting}
When it might not be clear which applicative
functor we mean, we prefix the operator with the
name of the functor (e.g., \lstinline{Gen.fmap}).
These operations are mainly useful as a way to
apply functions to values inside of some data
structure or computation.
For example, the
idiom ``\lstinline{g fmap x <*> y <*> z}'' applies
a pure function \lstinline{g} to the values in
three structures \lstinline{x}, \lstinline{y}, and
\lstinline{z}.

We can use these operations to define {\sf
genTree} like we would in {\sc
QuickCheck}~\cite{DBLP:conf/icfp/ClaessenH00},
since the {\sc
QuickCheck} type constructor {\sf Gen}, which represents
generators, is an applicative functor:\\
\begin{minipage}[c]{.95\textwidth}
\begin{lstlisting}
  genTree :: Int -> Gen Tree
  genTree 0 = pure Leaf
  genTree $h$ =
    oneof [ pure Leaf,
            Node fmap genInt
                 <*> genTree ($h$ - 1)
                 <*> genTree ($h$ - 1) ]
\end{lstlisting}
\end{minipage}
Here, {\sf pure} is the trivial
generator that always generates the same value,
and \lstinline{Node fmap g1 <*> g2 <*> g3} means
apply the constructor {\sf Node} to three
sub-generators to produce a new generator.
Operationally, this means sampling {\sf x1} from
{\sf g1}, {\sf x2} from {\sf g2}, and {\sf x3}
from {\sf g3}, and then constructing
\lstinline{Node x1 x2 x3}.  
Notice that we need one extra function beyond the
applicative interface: {\sf oneof} makes a uniform
choice between generators, just as we saw in the
pseudo-code.

We can do the same thing for {\sf parseTree},
using combinators inspired by libraries like {\sc
Parsec}~\cite{leijen2001parsec}:\\
\begin{minipage}[c]{.95\textwidth}
\begin{lstlisting}
  parseTree :: Int -> Parser Tree
  parseTree 0 = pure Leaf
  parseTree $h$ =
    choice [ ($\cc{l}$, pure Leaf),
             ($\cc{n}$, Node fmap parseInt
                      <*> parseTree ($h$ - 1)
                      <*> parseTree ($h$ - 1)) ]
\end{lstlisting}
\end{minipage}
In this context, \lstinline{pure} is a parser that
consumes no characters and never fails. It just
produces the value passed to it. We can interpret
\lstinline{Node fmap p1 <*> p2 <*> p3} as running
each sub-parser in sequence (failing if any of
them fail) and then wrapping the results in the
{\sf Node} constructor. Finally, we have replaced
{\sf oneof} with {\sf choice}, but the idea is the
same: choose between sub-parsers.

Parsers like this have type
\lstinline{String -> Maybe (a, String)}.
They can be applied to a string to obtain either
\lstinline{Nothing} or \lstinline{Just (a, s)},
where \lstinline{a} is the parse result and
\lstinline{s} contains any extra characters.

\SUBSECTION{Representing Free Generators}
\AP{} With the applicative interface in mind, we
can now give the formal definition of a
\intro{free generator}.\footnote{For algebraists:
free generators are ``free,'' in the sense that
they admit unique structure-preserving maps to
other ``generator-like'' structures. In
particular, the $\valueDist{\cdot}$ and
$\parser{\cdot}$ maps are canonical. For the sake
of space, we do not explore these ideas
further here.}

\SUBSUBSECTION{Type Definition}
We represent free generators as an inductive data
type, {\sf FGen}, defined as:\\
\begin{minipage}[c]{.95\textwidth}
\begin{lstlisting}
  data FGen a where
    Void :: FGen a
    Pure :: a -> FGen a
    Pair :: FGen a -> FGen b -> FGen (a, b)
    Map :: (a -> b) -> FGen a -> FGen b
    Select :: List (Char, FGen a) -> FGen a
\end{lstlisting}
\end{minipage}
These constructors form an abstract syntax tree
with nodes that roughly correspond to the functions
in the applicative interface. Clearly {\sf Pure}
represents {\sf pure}. {\sf Pair} is a slightly
different form of \lstinline{<*>}; one is
definable from the other, but this version makes
more sense as a data constructor. {\sf Map}
corresponds to \lstinline{fmap} (but note
  that the arguments to {\sf Map} are flipped relative
  to {\sf MapR} from \S\ref{sec:high-level}).
Finally, {\sf Select}
subsumes both {\sf oneof} and {\sf
choice}: it might mean either, depending on the
interpretation. Finally {\sf Void} represents an
always-failing parser or a generator of nothing.

Free generators draw inspiration from {\em free
applicative functors\/}~\cite{capriotti2014free}.
As with free
applicative functors, we can write transformations
\lstinline{FGen a -> f a} for any \lstinline{f}
with similar structure. This fact motivates the
rest of this section.

\SUBSUBSECTION{Language of a Free Generator}
\AP{} The \intro{language of a free
generator} is the set of choice sequences that it
can make or parse. It is defined
recursively, by cases:\\
\begin{minipage}[c]{.95\textwidth}
\begin{lstlisting}
  cL[[$\cdot$]] :: FGen a -> Set String
  cL[[Void]]      = $\varnothing$
  cL[[Pure a]]    = $\varepsilon$
  cL[[Map f x]]   = cL[[x]]
  cL[[Pair x y]]  = {s $\cdot$ t | s $\in$ cL[[x]] $\wedge$ t $\in$ cL[[y]]}
  cL[[Select xs]] = {c $\cdot$ s | (c, x) $\in$ xs $\wedge$ s $\in$ cL[[x]]}
\end{lstlisting}
\end{minipage}

\SUBSUBSECTION{Smart Constructors and Simplified Forms}
\AP{} Free generators admit a useful
\intro{simplified form}. To ensure that generators
are simplified, we can require that free generators
be built using {\em smart constructors}.

In particular, instead of using \lstinline{Pair} directly, we can pair
free generators with the smart constructor $\otimes$\\
\begin{minipage}[c]{.95\textwidth}
\begin{lstlisting}
  ($\otimes$) :: FGen a -> FGen b -> FGen (a, b)
  Void   $\otimes$ _      = Void
  _      $\otimes$ Void   = Void
  Pure a $\otimes$ y      = (\b -> (a, b)) fmap y
  x      $\otimes$ Pure b = (\a -> (a, b)) fmap x
  x      $\otimes$ y      = Pair x y
\end{lstlisting}
\end{minipage}
which makes sure that \lstinline{Void} and
\lstinline{Pure} are collapsed with respect to
\lstinline{Pair}. For
example, \lstinline{Pure a $\otimes$ Pure b}
collapses to \lstinline{Pure (a, b)}.

The smart constructor \lstinline{fmap} is a version of
\lstinline{Map} that does similar
collapsing:\\
\begin{minipage}[c]{.95\textwidth}
\begin{lstlisting}
  (fmap) :: (a -> b) -> FGen a -> FGen b
  f fmap Void   = Void
  f fmap Pure a = Pure (f a)
  f fmap x      = Map f x
\end{lstlisting}
\end{minipage}

We define \lstinline{pure} and \lstinline{<*>} so as to
make \lstinline{FGen} an applicative functor:\\
\begin{minipage}[c]{.95\textwidth}
\begin{lstlisting}
  pure :: a -> FGen a
  pure = Pure
  (<*>) :: FGen (a -> b) -> FGen a -> FGen b
  f <*> x = (\ (f, x) -> f x) fmap (f $\otimes$ x)
\end{lstlisting}
\end{minipage}

The smart constructor {\sf Select} looks like
this:\ifaftersubmission\bcp{Add a few words
to get it to overflow onto a second line.}\fi\\
\begin{minipage}[c]{.95\textwidth}
\begin{lstlisting}
  select :: List (Char, FGen a) -> FGen a
  select xs =
    case filter (\ (_, p) -> p $\neq$ Void) xs of
      xs | xs == [] || hasDups (map fst xs) -> $\bot$
      xs -> Select xs
\end{lstlisting}
\end{minipage}
This smart constructor filters out any
sub-generators that are {\sf Void} (since those
are functionally useless), and it fails (returning
$\bot$) if the final list of sub-generators is
empty or if it has duplicated choice tags. This
ensures that the operations on generators defined
later in this section will be well formed.

Finally, we define a smart constructor \lstinline{void = Void} for consistency.

When a generator is built using a finite tree of smart constructors, we say it
is in simplified form.
\footnote{
  In strict languages, the finiteness requirement
  for simplified forms is needed because it guarantees that
  the program producing the free generator will
  terminate. In lazy languages, one can write
  infinite co-inductive data structures; nevertheless, we focus
  on finite free generators, because, in practice,
  one rarely wants to generate values of arbitrary
  size.
  \ifaftersubmission
  \bcp{Spell out: What does infinite generator have to do with infinite
    output? }
  \fi
}

\SUBSUBSECTION{Examples}\label{subsec:fg-examples}
We saw a version of {\sf fgenTree} in
\S\ref{sec:high-level} that was written out
explicitly as an AST\@. Here's how it would
actually be done in our framework, with smart constructors:\\
\begin{minipage}[c]{.95\textwidth}
\begin{lstlisting}
  fgenTree :: Int -> FGen Tree
  fgenTree 0 = pure Leaf
  fgenTree $h$ =
    select [ ($\cc{l}$, pure Leaf),
             ($\cc{n}$, Node fmap fgenInt
                     <*> fgenTree ($h$ - 1)
                     <*> fgenTree ($h$ - 1)) ]
\end{lstlisting}
\end{minipage}
Recall that {\sf fgenTree} is meant to subsume both
{\sf genTree} and {\sf parseTree}. The height
parameter $h$ is used to cut
off the depth of trees and prevent the resulting free
generators from being infinitely deep.


Here is another example of a free generator that produces
random terms of the simply-typed lambda-calculus:\\
\begin{minipage}[c]{.95\textwidth}
\begin{lstlisting}
  fgenExpr :: Int -> FGen Expr
  fgenExpr 0 =
    select [ ($\cc{i}$, Lit fmap fgenInt),
             ($\cc{v}$, Var fmap fgenVar) ]
  fgenExpr h =
    select [ ($\cc{i}$, Lit fmap fgenInt),
             ($\cc{p}$, Plus fmap fgenExpr (h - 1)
                     <*> fgenExpr (h - 1)),
             ($\cc{l}$, Lam fmap fgenType
                    <*> fgenExpr (h - 1)),
             ($\cc{a}$, App fmap fgenExpr (h - 1)
                    <*> fgenExpr (h - 1)),
             ($\cc{v}$, Var fmap fgenVar) ]
\end{lstlisting}
\end{minipage}
Structurally this is quite similar to the previous
generator; it just has more cases and more
choices. This lambda calculus uses de Bruijn
indices for variables and has integers and
functions as values. This is a useful example
because while syntactically valid terms in this
language are easy to generate (as we just did), it is more
difficult to generate only well-typed terms. We
use this example as one of our case studies in
\S\ref{sec:evaluation}.

\SUBSECTION{Interpreting Free Generators}
A free generator does not {do} anything on
its own---it is simply a data structure. We next define the
interpretation functions that we mentioned in
\S\ref{sec:high-level} and prove a theorem
linking those interpretations together.

\SUBSUBSECTION{Free Generators as Generators of Values}
\AP{} The first and most natural way to interpret
a free generator is as a {\sc QuickCheck}
generator---that is, as a distribution over data
structures. We define the \intro{generator
interpretation} of a free generator to be:\\
\begin{minipage}[c]{.95\textwidth}
\begin{lstlisting}
  cG[[$\cdot$]] :: FGen a -> Gen a
  cG[[Void]]      = $\bot$
  cG[[Pure v]]    = Gen.pure v
  cG[[Map f x]]   = f Gen.fmap cG[[x]]
  cG[[Pair x y]]  =
    (\x y -> (x, y)) Gen.fmap cG[[x]] Gen.<*> cG[[y]]
  cG[[Select xs]] =
    oneof (map (\ (_, x) -> cG[[x]]) xs)
\end{lstlisting}
\end{minipage}
Note that the operations on the right-hand side of this
definition are {\em not\/} free generator
constructors; they are {\sc QuickCheck} generator
operations. This definition maps ``AST nodes'' to
the equivalent interpretation implemented by {\sf
Gen}. In the case for {\sf Pair}, notice the pattern
that we described earlier in this section. The
code
\begin{lstlisting}
(\x y -> (x, y)) Gen.fmap cG[[x]] Gen.<*> cG[[y]]
\end{lstlisting}
pairs the results of \lstinline{cG[[x]]} and
\lstinline{cG[[y]]} in a tuple via applicative
idiom ``\lstinline{g fmap x <*> y}''.

One detail worth noting is that the interpretation
behaves poorly (it diverges) on \textsf{Void}; fortunately, the following lemma
shows that this does not cause problems
in practice:
\begin{lemma}\label{lem:norm-not-void}
  If a \kl{free generator} $g$ is \kl[simplified form]{simplified}, then
  \[\text{$g$ contains \textsf{Void}} \iff g = \textsf{Void}.\]
\end{lemma}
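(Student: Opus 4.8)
The plan is to prove the two directions of the biconditional separately, with essentially all of the work in the forward direction. The backward direction is immediate: \textsf{Void} occurs as a subterm of itself, so $g = \textsf{Void}$ trivially ``contains \textsf{Void}.'' For the forward direction I would establish a stronger, disjunctive invariant that inducts cleanly: \emph{every} free generator built from smart constructors either \emph{is} \textsf{Void} or contains \emph{no} \textsf{Void} subterm at all. Writing $P(g)$ for this property, the lemma follows at once, since if a simplified $g$ contains \textsf{Void} then the second disjunct of $P(g)$ is excluded, forcing $g = \textsf{Void}$.

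The core is a structural induction on the finite tree of smart-constructor applications that produces $g$, showing that each smart constructor preserves $P$. The base cases are the nullary constructors: \textsf{void} yields \textsf{Void} (first disjunct), and \textsf{pure}~$v$ yields \textsf{Pure}~$v$, which is neither \textsf{Void} nor contains one (second disjunct). For the smart \textsf{fmap} applied to $x$ satisfying $P$, I split on its defining clauses: if $x = \textsf{Void}$ the result collapses to \textsf{Void}; if $x = \textsf{Pure}~a$ it collapses to another \textsf{Pure}; otherwise it is $\textsf{Map}~f~x$, whose \textsf{Void} subterms coincide with those of $x$, hence none by $P(x)$. The case for $\otimes$ is analogous and relies on the \textsf{fmap} case: if either argument is \textsf{Void} the product collapses to \textsf{Void}; otherwise neither argument contains \textsf{Void}, and the result is either a smart \textsf{fmap} of one argument (already handled) or $\textsf{Pair}~x~y$, whose \textsf{Void} subterms are the union of those of $x$ and $y$ and hence empty. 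Since \lstinline{<*>} is defined purely from $\otimes$ and \textsf{fmap}, it needs no separate treatment.

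The most delicate case is \textsf{select}. It first discards the top-level \textsf{Void} alternatives and then either fails (returning $\bot$, so no free generator is produced and the case is vacuous for simplified forms) or emits $\textsf{Select}~xs'$ on the filtered list. The subtle point is that filtering away the \emph{immediate} \textsf{Void} children does not on its own rule out a \textsf{Void} buried \emph{deeper} inside a surviving alternative; this is exactly where the induction hypothesis is needed. Each surviving sub-generator $x$ satisfies $P$ and is not \textsf{Void} (it passed the filter), so $P(x)$ forces it to contain no \textsf{Void} at all; thus the \textsf{Void} subterms of $\textsf{Select}~xs'$, being the union over the surviving children, are empty, and $\textsf{Select}~xs'$ is itself not \textsf{Void}, so $P$ is reestablished. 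I expect this case, together with the need to strengthen the raw biconditional to the invariant $P$ (the unstrengthened statement does not carry through the induction), to be the only real obstacle; the remaining cases are routine case analyses on the definitions of the smart constructors.
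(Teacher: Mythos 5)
Your proposal is correct and matches the paper's proof, which is simply ``by induction on the structure of $g$ and inspection of the smart constructors''; you have just spelled out the case analysis that the paper leaves implicit. (Your ``strengthened'' invariant $P(g)$ is in fact logically equivalent to the forward direction of the biconditional, so it is a reformulation rather than a genuine strengthening, but it does make the induction read cleanly.)
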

\begin{proof}
  By induction on the structure of $g$ and inspection of the smart constructors.
\end{proof}
\noindent Thus we can conclude that, as long as $g$
is in simplified form and not \textsf{Void},
$\valueDist{g}$ is defined.
\ifaftersubmission
\bcp{But we don't write the simplified form: e
  write smart constructors...}\hg{I'm not sure what you mean}\bcp{How does the
  programmer know whether they have written a smart constructor expression that
  simplifies to Void?}
\fi
\begin{example}
  $\valueDist{\textsf{fgenTree}~5}$ is equivalent to \lstinline{genTree 5}.
\end{example}

\SUBSUBSECTION{Free Generators as Parsers of Choice Sequences}
\AP{} Now we come to the main technical point of the paper.
We can make use of the character labels in the {\sf Select} nodes
using a free
generator's \intro{parser interpretation}---in
other words, we can view a free generator as a parser of choices. The
translation looks like this:\\
\begin{minipage}[c]{.95\textwidth}
\begin{lstlisting}
  cP[[$\cdot$]] :: FGen a -> Parser a
  cP[[Void]]      = \s -> Nothing
  cP[[Pure a]]    = Parser.pure a
  cP[[Map f x]]   = f Parser.fmap cP[[x]]
  cP[[Pair x y]]  =
    (\x y -> (x, y)) Parser.fmap cP[[x]] Parser.<*> cP[[y]]
  cP[[Select xs]] =
    choice (map (\ (c, x) -> (c, cP[[x]])) xs)
\end{lstlisting}
\end{minipage}
\noindent This definition uses the representation
of parsers as functions of type
\lstinline{String -> Maybe (a, String)}
that we saw earlier.
\begin{example}
  $\parser{\textsf{fgenTree}~5}$ is
  equivalent to \lstinline{parseTree 5}.
\end{example}

\SUBSUBSECTION{Free Generators as Generators of Choice Sequences}
\AP{} Our final interpretation of free generators captures the part of the
generator ``missed'' by the parser---it represents the distribution with which
the generator makes choices.
We define the \intro{choice distribution} of a free generator to be:\\
\begin{minipage}[c]{.95\textwidth}
\begin{lstlisting}
  cC[[$\cdot$]] :: FGen a -> Gen String
  cC[[Void]]      = $\bot$
  cC[[Pure a]]    = Gen.pure $\varepsilon$
  cC[[Map f x]]   = cC[[x]]
  cC[[Pair x y]]  =
    (\s t -> s $\cdot$ t) Gen.fmap cC[[x]] Gen.<*> cC[[y]]
  cC[[Select xs]] =
    oneof (map (\ (c, x) -> (c $\cdot$) Gen.fmap cC[[x]]) xs)
\end{lstlisting}
\end{minipage}
We can
think of the result of this interpretation as a
distribution over $\lang{g}$. The language of a
free generator is exactly those choice sequences
that the generator interpretation can make and the
parser interpretation can parse.

\SUBSUBSECTION{Factoring Generators}
These different interpretations of free generators
are closely related to one another; in
particular, we can reconstruct $\valueDist{\cdot}$
from $\parser{\cdot}$ and $\choiceDist{\cdot}$. In
essence, this means that a free generator's
generator interpretation can be factored into a
distribution over choice sequences plus a parser of
those sequences.

\AP{} To make this more precise, we need a notion
of equality for generators like the ones produced
via $\valueDist{\cdot}$. We say two {\sc
QuickCheck} generators are \intro[generator
equivalence]{equivalent}, written $g_1 \gequiv
g_2$, if and only if the generators represent the
same distribution over values. This is coarser
notion than program equality, since
two generators might produce the same distribution
of values in different ways.

With this in mind, we can state and prove the
relationship between different interpretations of
free generators:
\begin{theorem}[Factoring]\label{thm:coherence}
  Every \kl[simplified form]{simplified} \kl{free
    generator} can be factored into a parser and a
  distribution over choice sequences. In other
  words, for all simplified free generators $g \neq
  \textsf{Void}$,
  \[
    \parser{g} \fmap \choiceDist{g} \gequiv (\lambda x \to \textsf{Just}~(x, \varepsilon)) \fmap
    \valueDist{g}.
  \]
\end{theorem}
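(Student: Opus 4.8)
The plan is to prove the statement by structural induction on the simplified free generator $g$, after first strengthening it to a deterministic lemma about how the parser consumes choice sequences. The key observation is that $\valueDist{g}$ and $\choiceDist{g}$ are defined by the same recursion and make the same uniform choice at every \textsf{Select} node, so they are naturally coupled: a single resolution of all the \textsf{Select} choices produces both a value $v$ from $\valueDist{g}$ and a choice sequence $s$ from $\choiceDist{g}$. I would therefore first establish the following deterministic claim by induction on $g$: for every such coupled pair $(v, s)$ and every suffix $t$,
\[ \parser{g}(s \cdot t) = \textsf{Just}(v, t). \]
The quantification over an arbitrary suffix $t$ (rather than just $t = \varepsilon$) is the crucial strengthening that makes the \textsf{Pair} case go through.

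Before the induction, I would invoke Lemma~\ref{lem:norm-not-void}: since $g$ is simplified and $g \neq \textsf{Void}$, no \textsf{Void} node occurs anywhere in $g$, so every subterm encountered in the recursion is itself simplified and non-\textsf{Void}, and all three interpretations are defined on it. The base cases are then routine: \textsf{Pure}~$v$ has $s = \varepsilon$ and the parser consumes nothing, returning $\textsf{Just}(v, t)$; and \textsf{Map}~$f~x$ follows immediately from the inductive hypothesis together with the naturality of $\fmap$ on parsers.

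The two interesting cases are \textsf{Pair} and \textsf{Select}. For \textsf{Pair}~$x~y$, a coupled run splits into sub-runs for $x$ and $y$, yielding values $v_1, v_2$ and sequences $s_1, s_2$ with overall value $(v_1, v_2)$ and overall sequence $s_1 \cdot s_2$. Feeding $(s_1 \cdot s_2) \cdot t$ to the \textsf{Pair} parser runs $\parser{x}$ on $s_1 \cdot (s_2 \cdot t)$; by the inductive hypothesis with suffix $s_2 \cdot t$ this returns $\textsf{Just}(v_1, s_2 \cdot t)$, and then running $\parser{y}$ on the remainder $s_2 \cdot t$ returns $\textsf{Just}(v_2, t)$ by the inductive hypothesis with suffix $t$; pairing gives $\textsf{Just}((v_1, v_2), t)$. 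For \textsf{Select}~$xs$, a coupled run picks some branch $(c_i, x_i)$ together with a sub-run producing value $v$ and sequence $s'$, so the overall sequence is $c_i \cdot s'$. The \textsf{Select} parser consumes the leading character $c_i$ and dispatches to the matching branch; here I would use the invariant, guaranteed by the \textsf{select} smart constructor on simplified generators, that the choice tags are pairwise distinct, so the dispatch is unambiguous and selects exactly $x_i$. The inductive hypothesis for $x_i$ then yields $\textsf{Just}(v, t)$.

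Finally, the theorem follows from the deterministic lemma by taking $t = \varepsilon$ and integrating over the shared randomness. Because $\valueDist{g}$ and $\choiceDist{g}$ sample the same choices, the joint distribution of $(v, s)$ is realized by this coupling, so $\parser{g} \fmap \choiceDist{g}$ maps each sampled $s$ to $\parser{g}(s) = \textsf{Just}(v, \varepsilon)$, which is exactly $(\lambda x \to \textsf{Just}(x, \varepsilon))$ applied to the corresponding sample $v$ of $\valueDist{g}$. Hence the two distributions agree, establishing the required $\gequiv$. I expect the main obstacle to be making the coupling between $\valueDist{g}$ and $\choiceDist{g}$ fully rigorous---that is, justifying that the two \textsf{Gen} interpretations can be driven by a common source of choices so that the deterministic lemma transfers to an equality of distributions---since everything else reduces to the structural parsing argument above.
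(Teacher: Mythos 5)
Your proposal is correct, and it reaches the result by a genuinely different route from the paper. The paper proves the theorem by a direct equational induction on $g$ at the level of distributions, pushing $\fmap$ and $\ap$ around using functor, applicative, parser, and generator laws in each case; the only pointwise fact it needs is buried in an auxiliary lemma for the \textsf{Pair} case, namely that if $\choiceDist{x}$ generates $s$ then $\parser{x}(s \cdot t) = \textsf{Just}(a, t)$ for \emph{some} $a$ and every suffix $t$. You instead strengthen that fact into the centerpiece of the whole argument: a deterministic claim about coupled runs that additionally pins down $a$ as the value $v$ produced by the same resolution of the \textsf{Select} choices, proved with the same crucial ``arbitrary suffix $t$'' generalization, after which the theorem follows in one lifting step with $t = \varepsilon$. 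What your approach buys is a cleaner separation of concerns --- all the parsing content lives in one pointwise lemma, and the \textsf{Select} and \textsf{Map} cases avoid the paper's somewhat fiddly commutations of \textsf{oneof} and \textsf{choice} with $\fmap$. What it costs is exactly the obstacle you identify: you must construct the joint distribution on pairs $(v, s)$ (e.g.\ as a single generator of type $\textsf{Gen}\,(a, \textsf{String})$ defined by the common recursion) and verify by a further induction that its marginals really are $\valueDist{g}$ and $\choiceDist{g}$; the paper sidesteps this by never leaving the equational level. Both proofs rely on the same consequences of simplified form (no \textsf{Void} subterms, no \textsf{Pure} on the left of a \textsf{Pair}, distinct tags in \textsf{Select}), and your invocation of them is in the right places.
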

\begin{proofsketch}
  By induction on the structure of $g$.
  \begin{itemize}
  \item[Case] $g = \textsf{Pure a}$. Straightforward.
  \item[Case] $g = \textsf{Map f x}$. Straightforward.
  \item[Case] $g = \textsf{Pair x y}$. This case is the most interesting one. The difficulty is that
    it is not immediately obvious why
    $\parser{\textsf{Pair x y}} \fmap \choiceDist{\textsf{Pair x y}}$
    should be a function of
    $\parser{\textsf{x}} \fmap \choiceDist{\textsf{x}}$
    and
    $\parser{\textsf{y}} \fmap \choiceDist{\textsf{y}}$.
    Showing the correct relationship requires a lemma that says that for any sequence $s$ generated
    by $\choiceDist{\textsf{x}}$ and an arbitrary sequence $t$, there is some $a$ such that
    $\parser{\textsf{x}}~(s \cdot t) = \textsf{Just}~(a, t)$.
  \item[Case] $g = \textsf{Select xs}$. The reasoning in this case is a bit subtle, since it
    requires certain operations to commute with \textsf{Select}, but the details are not
    particularly instructive.
  \end{itemize}
  See Appendix~\ref{appendix:coherence} for the full proof.
\end{proofsketch}

A natural corollary of Theorem~\ref{thm:coherence}
is the following:
\begin{corollary}
  Any finite applicative generator, $\gamma$,
  written in terms of pure functions, $\fmap$, {\sf
    pure}, $\ap$, and {\sf oneof}, can be factored
  into a parser and distribution over choice
  sequences.
\end{corollary}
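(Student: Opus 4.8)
The plan is to reduce the corollary directly to Theorem~\ref{thm:coherence} by showing that every finite applicative generator $\gamma$ of the stated form is \emph{realized} by some simplified free generator. Concretely, I would define a translation $\lfloor \cdot \rfloor$ from the syntax of $\gamma$ (built from pure functions, $\fmap$, \textsf{pure}, $\ap$, and \textsf{oneof}) into \textsf{FGen}, sending \textsf{pure}~$v$ to \textsf{Pure}~$v$, the term $f \fmap \gamma'$ to $f \fmap \lfloor \gamma' \rfloor$, the term $\gamma_1 \ap \gamma_2$ to $\lfloor \gamma_1 \rfloor \ap \lfloor \gamma_2 \rfloor$, and $\textsf{oneof}~[\gamma_1,\dots,\gamma_n]$ to $\textsf{select}~[(c_1,\lfloor\gamma_1\rfloor),\dots,(c_n,\lfloor\gamma_n\rfloor)]$, where all of the right-hand sides use the \emph{smart} constructors of \S\ref{sec:free}. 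Because the translation uses only smart constructors and $\gamma$ is finite, the result $g := \lfloor\gamma\rfloor$ is, by definition, in simplified form.

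Next I would prove the key lemma that this translation preserves the generator interpretation up to generator equivalence, i.e.\ $\valueDist{\lfloor\gamma\rfloor} \gequiv \gamma$. This goes by induction on the structure of $\gamma$, using in each case that $\valueDist{\cdot}$ sends the smart constructors to the corresponding \textsf{Gen} operations: that the smart \textsf{fmap} is interpreted as \textsf{Gen.fmap} and the smart $\ap$ as the pairing idiom even after the Pure/Void collapsing, which follows from the applicative laws for \textsf{Gen}. The \textsf{oneof} case is immediate, since $\valueDist{\textsf{Select}~xs}$ simply discards the choice tags and applies \textsf{oneof} to the interpreted children. I would also check that $g \neq \textsf{Void}$: because $\gamma$ is a genuine generator, every \textsf{oneof} ranges over a non-empty list of non-\textsf{Void} alternatives, so by induction none of the smart constructors collapse the translation to \textsf{Void} (appealing to Lemma~\ref{lem:norm-not-void}).

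With $g$ simplified and $g \neq \textsf{Void}$, Theorem~\ref{thm:coherence} applies, and composing it with the preservation lemma gives
\[ \parser{g} \fmap \choiceDist{g} \gequiv (\lambda x \to \textsf{Just}~(x,\varepsilon)) \fmap \valueDist{g} \gequiv (\lambda x \to \textsf{Just}~(x,\varepsilon)) \fmap \gamma, \]
which exhibits $\gamma$ as the composition of the choice distribution $\choiceDist{g}$ with the parser $\parser{g}$ — precisely the desired factoring.

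The main obstacle I anticipate is the assignment of choice tags in the \textsf{oneof} case. The \textsf{Select} constructor labels its alternatives with characters, and the \textsf{select} smart constructor rejects duplicated tags within a node, whereas a bare \textsf{oneof} carries no labels at all. I would resolve this by choosing, at each \textsf{oneof}, distinct tags $c_1,\dots,c_n$ for its alternatives; this is possible because $\gamma$ is finite, so each choice has boundedly many branches, and distinctness is required only \emph{within} a single node, so tags may be freely reused across different \textsf{oneof}s. This works provided the tag alphabet is large enough to cover the widest \textsf{oneof}, so I would state this mild assumption explicitly — or, equivalently, treat \textsf{Char} as a sufficiently large countable set of labels.
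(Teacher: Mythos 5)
Your proposal is correct and follows essentially the same route as the paper: translate $\gamma$ into a free generator by replacing each operation with the corresponding smart constructor (drawing fresh labels per \textsf{oneof}), show $\valueDist{g} \gequiv \gamma$ by induction, and then invoke Theorem~\ref{thm:coherence}. Your version is simply more explicit about the details the paper elides (the non-\textsf{Void} side condition and the sufficiency of the tag alphabet), which are worthwhile refinements rather than a different argument.
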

\begin{proof}
  Translate $\gamma$ into a free generator,
  $g$, by replacing operations with the equivalent
  smart constructor. (For {\sf oneof}, draw unique
  labels for each choice and use {\sf select}.) By
  induction, $\valueDist{g} = \gamma$.

  The resulting free generator can be factored
  into a parser and a choice distribution via
  Theorem~\ref{thm:coherence}. Thus,
  \[
    (\lambda x \to \textsf{Just}~(x, \varepsilon))
    \fmap \gamma \gequiv \parser{g} \fmap \choiceDist{g},
  \]
  and $\gamma$ can be factored as desired.
\end{proof}
This corollary gives a concrete way to view the
connection between applicative generators and
parsers.

\SUBSECTION{Replacing a Generator's Distribution}\label{subsec:replacing-dist}
Since a generator $g$ can be factored using
$\choiceDist{\cdot}$ and $\parser{\cdot}$, we can
explore what it would look like to modify a
generator's distribution (i.e., change or replace
$\choiceDist{\cdot}$) without having to modify the
entire generator.

Suppose we have some other distribution that we
want our choices to follow. We can represent an external distribution as a
function from a history of choices to a generator
of next choices, together with a ``current'' history.
We write this type as:
\begin{lstlisting}
  type Dist = (String, String -> Gen (Maybe Char))
\end{lstlisting}
(If the choice function returns \lstinline{Nothing}, then generation stops.)

A \lstinline{Dist} may be arbitrarily complex:
it might contain information obtained
from example-based tuning, a machine learning
model, or some other automated tuning process. How
would we use such a distribution in place of the
standard distribution given by
$\choiceDist{\cdot}$?

The solution is to replace $\choiceDist{\cdot}$
with our new distribution to yield a modified
definition of the generator interpretation:\\
\begin{minipage}[c]{.95\textwidth}
\begin{lstlisting}
  cGhat[[$\cdot$]] :: (Dist, FGen a) -> Gen (Maybe a)
  cGhat[[((h, d), g)]] = cP[[g]] Gen.fmap genDist h
    where
      genDist h = d h >>= \x -> case x of
        Nothing -> Gen.pure h
        Just c -> genDist (h $\cdot$ c)
\end{lstlisting}
\end{minipage}
This definition exploits our new connection
between parsers and generators to obtain a new
generator interpretation via the parser
interpretation.

Since replacing a free generator's distribution
does not actually change the structure of the
generator, we can have a different distribution
for each use-case of the free generator. In a
property-based testing scenario, one could imagine
the tester fine-tuning a distribution for each
property, carefully optimized to find bugs
as quickly as possible.

\section{Derivatives of Free Generators}\label{sec:derivatives}
Next, we review the notion of
Brzozowski derivative in formal language theory
and show that a similar operation exists for
\kl[free generator]{free generators}. The way these derivatives fall out from
the structure of free
generators highlights the advantages of
taking the correspondence between generators and
parsers seriously.

\SUBSECTION{Background: Derivatives of Languages}
The {\em Brzozowski
derivative\/}~\cite{brzozowski1964derivatives} of a
formal language $L$ with respect to some choice
$c$ is defined as\bcp{$\deriv_{c} L$, or $\deriv_{c} (L)$?}
\[ \deriv_{c} L = \{s \mid c \cdot s \in L\}. \]
In other words, the derivative is the set of
strings in $L$ with $c$ removed from the front.
For example,
\[
  \deriv_{\cc{a}} \{\cc{abc}, \cc{aaa}, \cc{bba}\} = \{\cc{bc}, \cc{aa}\}.
\]

Many formalisms for defining languages support syntactic
transformations that correspond to Brzozowski
derivatives. For example, we can take the
derivative of a regular expression like this:
\begin{figure}[H]
  \centering
  \begin{subfigure}{0.59\columnwidth}
    \begin{align*}
      \deriv_{c} \varnothing &= \varnothing \\
      \deriv_{c} \varepsilon &= \varnothing \\
      \deriv_{c} \cc{c} &= \varepsilon \quad (c = \cc{c}) \\
      \deriv_{c} \cc{d} &= \varnothing \quad (c \neq \cc{d}) \\
      \deriv_{c} (r_1 + r_2) &= \deriv_{c} r_1 + \deriv_{c} r_2 \\
      \deriv_{c} (r_1 \cdot r_2) &= \deriv_{c} r_1 \cdot r_2 + \nullable r_1 \cdot \deriv_{c} r_2 \\
      \deriv_{c} (r^*) &= \deriv_{c}r \cdot r^*
    \end{align*}
  \end{subfigure}
  \begin{subfigure}{0.4\columnwidth}
    \begin{align*}
      \nullable \varnothing &= \varnothing \\
      \nullable \varepsilon &= \varepsilon \\
      \nullable \cc{c} &= \varnothing \\
      \nullable (r_1 + r_2) &= \nullable r_1 + \nullable r_2 \\
      \nullable (r_1 \cdot r_2) &= \nullable r_1 \cdot \nullable r_2 \\
      \nullable (r^*) &= \varepsilon
    \end{align*}
  \end{subfigure}
\end{figure}
\noindent \AP{} The $\nullable$ operator, used in
the ``$\cdot$'' rule and defined on the right, determines the
{\em nullability\/} of an expression (whether or
not it accepts $\varepsilon$). As one would hope,
if $r$ has language $L$, it is always the case
that $\deriv_{c} r$ has language $\deriv_{c} L$.

\SUBSECTION{The Free Generator Derivative}
\AP{} Since free generators define a language
(given by $\lang{\cdot}$),
can we take their derivatives? Yes, we can!

The
\intro[derivative of a free generator]{derivative}
of a free generator $g$ with respect to a character $c$, written
$\genderiv_{c}(g)$, is defined as follows:\\
\begin{minipage}[c]{.95\textwidth}
\begin{lstlisting}
  $\genderiv$ :: Char -> FGen a -> FGen a
  $\genderiv_{c}$Void        = void
  $\genderiv_{c}$(Pure v)    = void
  $\genderiv_{c}$(Map f x)   = f fmap $\genderiv_{c}$x
  $\genderiv_{c}$(Pair x y)  = $\genderiv_{c}$x $\otimes$ y
  $\genderiv_{c}$(Select xs) = if ($c$, x) $\in$ xs then x else void
\end{lstlisting}
\end{minipage}
Most of this definition should be intuitive. The
derivative of a generator that does not make a
choice (i.e., \lstinline{Void} and
\lstinline{Pure}) is \lstinline{void}, since the
corresponding language would be empty.
The derivative commutes with \lstinline{Map} since
the transformation affects choices, not the final
result. \lstinline{Select}'s derivative is
just the argument generator corresponding to the
appropriate choice.

The one potentially confusing case is the one for
\lstinline{Pair}. We have defined the derivative
of a pair of generators by taking the derivative
of the first generator in the pair and leaving the
second unchanged, which seems inconsistent with
the case for ``$\cdot$'' in the regular expression
derivative (what happens when the first
generator's language is nullable?). Luckily, our
\kl{simplified form} clears up the confusion: if
\lstinline{Pair x y} is in simplified form,
$\textsf{x}$ is not nullable. This is a simple
corollary of Lemma~\ref{lem:norm-null}.

\begin{lemma}\label{lem:norm-null}
  If a \kl{free generator} $g$ is in \kl{simplified form}, then either $g =
  \text{\lstinline{Pure a}}$ or $\varepsilon \notin \lang{g}$.
\end{lemma}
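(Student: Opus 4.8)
The plan is to proceed by structural induction on the finite tree of smart constructors used to build $g$. At the root, $g$ is the result of applying one smart constructor ($\textsf{void}$, $\textsf{pure}$, $\fmap$, $\otimes$, or $\textsf{select}$) to already-built simplified free generators, and the inductive hypothesis is the statement of the lemma applied to each of those immediate sub-generators. In every case I first case-split on the outermost smart constructor and then, crucially, on which $\textsf{FGen}$ node that smart constructor actually emits, since a single smart constructor can collapse to several different AST shapes.

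The base and easy cases are immediate. If $g = \textsf{void} = \textsf{Void}$ then $\lang{g} = \varnothing$, so $\varepsilon \notin \lang{g}$; if $g = \textsf{pure a} = \textsf{Pure a}$ the first disjunct holds. For $g = f \fmap x$ I inspect the smart $\fmap$: it returns $\textsf{Void}$ or $\textsf{Pure (f a)}$ (already handled), or else $\textsf{Map f x}$ with $x$ neither $\textsf{Void}$ nor $\textsf{Pure}$; in that last subcase $\lang{\textsf{Map f x}} = \lang{x}$, and because $x$ is a simplified form that is not $\textsf{Pure}$, the inductive hypothesis forces $\varepsilon \notin \lang{x}$. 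For $g = \textsf{select xs}$ the smart constructor returns either $\bot$ (not a defined generator, hence excluded) or $\textsf{Select xs'}$; by the definition of $\lang{\textsf{Select xs'}}$ every word has the shape $c \cdot s$ and is thus nonempty, so $\varepsilon \notin \lang{g}$ irrespective of the contents of $xs'$.

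The only case demanding real attention is $g = x \otimes y$, which is exactly what justifies the otherwise-suspicious $\textsf{Pair}$ clause of the derivative. Reading the clauses of $\otimes$ in order, a genuine $\textsf{Pair x y}$ node is emitted only when none of the earlier patterns fire, i.e., when neither argument is $\textsf{Void}$ and neither is of the form $\textsf{Pure}$; the other alternatives either yield $\textsf{Void}$ or rewrite a $\textsf{Pure}$ factor into a smart $\fmap$ application, which the $\fmap$ analysis above already covers. Hence in the surviving $\textsf{Pair x y}$ case both children are simplified forms that are not $\textsf{Pure}$, so the inductive hypothesis yields $\varepsilon \notin \lang{x}$. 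Since $\lang{\textsf{Pair x y}} = \{\, s \cdot t \mid s \in \lang{x} \wedge t \in \lang{y} \,\}$, membership $\varepsilon \in \lang{\textsf{Pair x y}}$ would require $s = t = \varepsilon$ and in particular $\varepsilon \in \lang{x}$, a contradiction. The main obstacle is therefore not a deep argument but the bookkeeping: one must track, through the collapsing behaviour of $\otimes$ and $\fmap$, that every surviving $\textsf{Map}$ or $\textsf{Pair}$ node has non-$\textsf{Pure}$ children, so that the inductive hypothesis delivers the $\varepsilon \notin \lang{\cdot}$ branch rather than the $\textsf{Pure}$ branch.
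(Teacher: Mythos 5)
Your proposal is correct and follows essentially the same route as the paper's own proof: structural induction with the key observation that the smart constructors $\otimes$ and $\fmap$ never emit a \textsf{Pair} or \textsf{Map} node with a \textsf{Pure} (or \textsf{Void}) child, so the inductive hypothesis always delivers the $\varepsilon \notin \lang{\cdot}$ disjunct, while \textsf{Select} languages are nonempty-word-only by definition. Your framing of the induction as being over the finite tree of smart constructors (rather than the paper's looser ``induction on the structure of $g$'') is a slightly more careful rendering of the same argument, not a different one.
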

\begin{proofsketch}
  See Appendix~\ref{appendix:norm-null}.
\end{proofsketch}

Note that  the derivative of a simplified generator is
  simplified. This follows simply from the
  definition, since we only use smart constructors
  and parts of the original generator
  to build the derivative generators. By induction,
  this also means that repeated derivatives
  preserve simplification.

Besides clearing up the issue with {\sf Pair},
Lemma~\ref{lem:norm-null} also says that we can
define \intro(GEN){nullability} for free
generators simply as:\\
\begin{minipage}[c]{.95\textwidth}
\begin{lstlisting}
  $\gennullable$ :: FGen a -> Set a
  $\gennullable$(Pure v) = {v}
  $\gennullable$g        = $\varnothing$     (g $\neq$ Pure v)
\end{lstlisting}
\end{minipage}
Note that we get a bit more information here than
we do from regular expression nullability. For a
regular expression $r$, $\nullable r$ is either
$\varnothing$ or $\varepsilon$. Here, we allow the
null check to return either $\varnothing$ or the
singleton set containing the value in the {\sf
Pure} node. This means that $\gennullable$ for
free generators extracts a value that can be
obtained by making no further choices.

Our definition
of derivative acts the way we expect:
\begin{theorem}[Language Consistency]\label{thm:deriv-consistent}
  For all \kl[simplified form]{simplified}
  \kl[free generator]{free generators} $g$ and choices $c$,
  \[ \deriv_{c} \lang{g} = \lang{\genderiv_{c}g}. \]
\end{theorem}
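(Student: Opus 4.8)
The plan is to proceed by induction on the structure of the simplified free generator $g$, following the five constructor cases and appealing at each step to the definition of $\genderiv$ and of $\lang{\cdot}$. The two base cases are immediate: for $g = \textsf{Void}$ both sides are $\varnothing$, and for $g = \textsf{Pure}~v$ we have $\deriv_c \lang{\textsf{Pure}~v} = \deriv_c \{\varepsilon\} = \varnothing$ (no word of $\{\varepsilon\}$ begins with $c$), while $\lang{\genderiv_c(\textsf{Pure}~v)} = \lang{\textsf{void}} = \varnothing$.

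For the $\textsf{Map}~f~x$ case I would first record the auxiliary fact that $\lang{\cdot}$ is invariant under both $\textsf{Map}$ and the $\textsf{fmap}$ smart constructor, i.e. $\lang{f \mathbin{\textsf{fmap}} z} = \lang{z}$, checking the three collapsing clauses of $\textsf{fmap}$. Then $\deriv_c \lang{\textsf{Map}~f~x} = \deriv_c \lang{x}$, which by the induction hypothesis equals $\lang{\genderiv_c x} = \lang{f \mathbin{\textsf{fmap}} \genderiv_c x} = \lang{\genderiv_c(\textsf{Map}~f~x)}$. For the $\textsf{Select}~xs$ case, the derivative of the language unfolds to $\{s \mid c \cdot s \in \lang{\textsf{Select}~xs}\}$; since every word of $\lang{\textsf{Select}~xs}$ begins with its choice tag, this set is $\lang{x}$ when some $(c, x) \in xs$ and $\varnothing$ otherwise. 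The crucial point is that the $\textsf{select}$ smart constructor forbids duplicate tags, so in a simplified generator the branch matching $c$ is unique and well defined---which is exactly what $\genderiv_c(\textsf{Select}~xs)$ returns.

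The main obstacle is the $\textsf{Pair}~x~y$ case, where the naive expectation---by analogy with the regular-expression rule $\deriv_c(r_1 \cdot r_2) = \deriv_c r_1 \cdot r_2 + \nullable r_1 \cdot \deriv_c r_2$---is that a second ``nullability correction'' term is needed when $x$ can accept $\varepsilon$. I would dispatch this by invoking simplified form. Since every $\textsf{Pair}$ node in a simplified generator is produced by the $\otimes$ smart constructor, its left component $x$ is neither $\textsf{Void}$ nor of the form $\textsf{Pure}~a$ (such a pair would have collapsed), and $x$ is itself simplified; Lemma~\ref{lem:norm-null} then yields $\varepsilon \notin \lang{x}$. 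Consequently every $s \in \lang{x}$ is nonempty, so in any factorization $c \cdot w = s \cdot t$ with $s \in \lang{x}$ and $t \in \lang{y}$ the leading character $c$ must be consumed by $s$: writing $s = c \cdot s'$ gives $w = s' \cdot t$ with $s' \in \deriv_c \lang{x}$. This shows $\deriv_c \lang{\textsf{Pair}~x~y} = \{u \cdot t \mid u \in \deriv_c\lang{x},\ t \in \lang{y}\}$, with no correction term.

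To finish the $\textsf{Pair}$ case I would apply the induction hypothesis $\deriv_c \lang{x} = \lang{\genderiv_c x}$ together with a second auxiliary lemma stating that $\otimes$ realizes language concatenation, $\lang{a \otimes b} = \{s \cdot t \mid s \in \lang{a},\ t \in \lang{b}\}$, verified by inspecting the five clauses of $\otimes$ using $\lang{\textsf{Void}} = \varnothing$ and $\lang{\textsf{Pure}~a} = \{\varepsilon\}$. These rewrites turn the set above into $\lang{\genderiv_c x \otimes y} = \lang{\genderiv_c(\textsf{Pair}~x~y)}$, closing the case and the induction. The only genuine subtlety, and the step I expect to demand the most care, is the appeal to Lemma~\ref{lem:norm-null} to rule out the nullable-left scenario; the remainder is bookkeeping about how the smart constructors preserve languages.
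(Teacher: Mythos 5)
Your proposal is correct and follows essentially the same route as the paper's proof: structural induction on $g$, with the key step being the appeal to Lemma~\ref{lem:norm-null} (justified by the $\otimes$ smart constructor never placing \textsf{Pure} or \textsf{Void} on the left) to discharge the nullability correction term in the \textsf{Pair} case. Your explicit auxiliary lemmas that \textsf{fmap} and $\otimes$ preserve/concatenate languages make rigorous two steps the paper compresses into ``by defn,'' but the argument is the same.
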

\begin{proofsketch}
  By induction (see
  Appendix~\ref{appendix:deriv-consistent}).
\end{proofsketch}
This theorem says that the derivative of a free
generator's language is the same as the language
of its derivative.

Besides consistency with respect to the language
interpretation, the derivative operation should preserve the
generator output for a given sequence of choices.
If a free generator chooses
\begin{center}
  \cc{ntll} to yield \lstinline{Node True Leaf Leaf},
\end{center}
we would like for the derivative of that free
generator with respect to \cc{n} to produce the
same value after choosing \cc{tll}. We can
formalize this expectation via the parser interpretation:
\begin{theorem}[Value Consistency]\label{thm:deriv-value-consistent}
  For all \kl[simplified form]{simplified}
  \kl[free generator]{free generators} $g$, choice sequences $s$, and choices
  $c$,
  \[ \parser{\genderiv_{c}{g}}~s = \parser{g}~(c \cdot s). \]
\end{theorem}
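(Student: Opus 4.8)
The plan is to proceed by structural induction on the simplified free generator $g$, comparing the two parsers pointwise on an arbitrary input $s$. The cases for $\textsf{Void}$, $\textsf{Map}$, and $\textsf{Select}$ are quick. For $\textsf{Void}$, both sides are $\textsf{Nothing}$, since $\genderiv_c\textsf{Void} = \textsf{void}$ and $\parser{\textsf{Void}}$ always fails. For $\textsf{Map}~f~x$, I would unfold the derivative to $f~\textsf{fmap}~\genderiv_c x$ and note that the parser interpretation merely post-composes $f$ inside the $\textsf{Maybe}$ functor; the inductive hypothesis on $x$ then gives the result, and this stays correct no matter how the smart constructor $\textsf{fmap}$ collapses (its $\textsf{Void}$ and $\textsf{Pure}$ collapses each match the corresponding $\textsf{Nothing}$/$\textsf{Just}$ behavior of the parser). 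For $\textsf{Select}~xs$, no induction is needed: by the definition of $\genderiv_c$ together with the no-duplicate-tags invariant of simplified forms, if $(c, x) \in xs$ then both sides reduce to $\parser{x}~s$, and otherwise both sides are $\textsf{Nothing}$ (the left because the derivative is $\textsf{void}$, the right because the leading $c$ matches no tag).

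The key case is $\textsf{Pair}~x~y$, where $\genderiv_c(\textsf{Pair}~x~y) = \genderiv_c x \otimes y$. The parser for a pair runs its first component, threads the leftover input into its second component, and pairs the results (failing if either fails). The crucial point is that the derivative only modifies the first component, and the inductive hypothesis $\parser{\genderiv_c x}~s = \parser{x}~(c \cdot s)$ says exactly that parsing $s$ with the differentiated first component produces the same value \emph{and} the same leftover input as parsing $c \cdot s$ with the original first component. Hence the remainder handed to $\parser{y}$ on the left coincides with the remainder handed to $\parser{y}$ on the right, so the two $\textsf{Pair}$ parsers agree, including in their failure cases. This is where the non-nullability of $x$ in a simplified $\textsf{Pair}$ (Lemma~\ref{lem:norm-null}) does the real work: it guarantees that the leading $c$ is always consumed by $x$ rather than passed through to $y$, which is precisely what licenses differentiating only the first component. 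I would also discharge the minor bookkeeping from the $\otimes$ smart constructor (its $\textsf{Pure}$/$\textsf{Void}$ collapses), but these coincide with the parser behavior just as in the $\textsf{Map}$ case.

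The step I expect to be the real obstacle is the $\textsf{Pure}~v$ base case. Here $\genderiv_c(\textsf{Pure}~v) = \textsf{void}$, so the left-hand side is $\parser{\textsf{void}}~s = \textsf{Nothing}$, whereas $\textsf{Pure}~v$ consumes no input and always succeeds, giving $\parser{\textsf{Pure}~v}~(c \cdot s) = \textsf{Just}~(v, c \cdot s)$. These disagree, so the identity genuinely fails at a top-level $\textsf{Pure}$, and the statement needs a side condition. The natural fix is to exclude a bare $\textsf{Pure}$ at the root (equivalently, to restrict to inputs with $c \cdot s \in \lang{g}$, under which the $\textsf{Pure}$ case is vacuous because $\lang{\textsf{Pure}~v} = \{\varepsilon\}$ contains no string beginning with $c$). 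Crucially, this caveat does not infect the induction: the smart constructors $\textsf{fmap}$ and $\otimes$ collapse every $\textsf{Pure}$ (and $\textsf{Void}$) child, so a simplified $\textsf{Map}$ or $\textsf{Pair}$ never has a $\textsf{Pure}$ sub-generator. The inductive hypothesis is therefore only ever invoked at $\textsf{Pair}$ or $\textsf{Select}$ children, and the offending case arises only at the very top of $g$.
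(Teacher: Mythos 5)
Your proof follows the same route as the paper's: structural induction on $g$, with the $\textsf{Select}$ case discharged directly (both sides reduce to $\parser{\textsf{x}}~s$ for the unique $c$-tagged branch), the $\textsf{Map}$ case by commuting $f$ past the parser and applying the inductive hypothesis, and the $\textsf{Pair}$ case resting on the inductive hypothesis for the first component together with Lemma~\ref{lem:norm-null} to justify that only the first component is differentiated. Your observation that the IH delivers not just the same value but the same leftover input, so that $\parser{\textsf{y}}$ receives the same remainder on both sides, is exactly the content of the paper's $\textsf{Pair}$ case.

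Where you depart from the paper is the $\textsf{Pure}$ base case, and you are right to flag it. The paper's proof disposes of this case with ``$\textsf{Nothing} = \textsf{Nothing}$,'' but under the paper's own parser semantics ($\textsf{Parser.pure}$ ``consumes no characters and never fails''), $\parser{\textsf{Pure}~v}~(c \cdot s) = \textsf{Just}~(v, c \cdot s)$, while $\parser{\genderiv_{c}(\textsf{Pure}~v)}~s = \parser{\textsf{Void}}~s = \textsf{Nothing}$; the identity genuinely fails at a top-level $\textsf{Pure}$, so the theorem needs a side condition and the paper's treatment of this case is wrong as written. Your diagnosis that the failure is confined to the root is also sound: the smart constructors $\otimes$ and $\fmap$ eliminate $\textsf{Pure}$ children of $\textsf{Pair}$ and $\textsf{Map}$, and the $\textsf{Select}$ case needs no inductive hypothesis, so the induction never recurses into a bare $\textsf{Pure}$ (and the possible $\textsf{Pure}$/$\textsf{Void}$ outputs of $\genderiv_{c}$ itself are absorbed correctly by the smart constructors, as you note). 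Either of your proposed repairs---excluding $g = \textsf{Pure}~v$, or restricting to inputs with $c \cdot s \in \lang{g}$, which makes the case vacuous---yields a correct statement, and the rest of your argument goes through unchanged.
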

\begin{proofsketch}
  Mostly straightforward
  induction. See
  Appendix~\ref{appendix:deriv-value-consistent}.
\end{proofsketch}
The upshot of this theorem is that derivatives do
not fundamentally change the results of a free
generator, they only fix a particular choice.

These two consistency theorems together mean that
we can simulate a free generator's choices by
taking repeated derivatives. Each derivative fixes
a particular choice, so a sequence of derivatives
fixes a choice sequence.

\section{Generating Valid Results with Gradients}\label{sec:grad}
We now put the theory of
\kl[FGen]{free generators} and their
\kl(GEN)[derivative]{derivatives} into practice.
We introduce \fullalgo{}
(\shortalgo{}), an algorithm for generating data
that satisfies a validity condition.

\SUBSECTION{The Algorithm}
\AP{} Given a simple free generator,
\intro[cgs]{\fullalgo{}}
``previews'' its choices using derivatives. In fact, it
previews all possible choices, essentially taking
the {\em gradient\/} of the free generator. (This
is akin to the gradient in calculus, which is a
vector of partial derivatives with respect to each
variable.) We write
  \[
\nabla g = \langle \genderiv_{\cc{a}}g,\ \genderiv_{\cc{b}}g,\ \genderiv_{\cc{c}}g  \rangle
\]
for the gradient of $g$ with respect to alphabet
$\{\cc{a}, \cc{b}, \cc{c}\}$. Each derivative in
the gradient can then be sampled\bcp{??}, using
$\valueDist{\cdot}$, to get a sense of how good or
bad the respective choice was. This provides a
metric that guides the algorithm toward valid inputs.

With this intuition in mind, we present the
\shortalgo{} algorithm, shown in
Figure~\ref{alg:gen-valid}, which searches for
valid results using repeated free generator
gradients.  \bcp{The little triangles are not a very standard comma syntax...
 }\hg{It's the default for this environment and I was just reading a paper that
   used them... If I have time I'll try to change the macro}

\begin{figure}[h]
  \begin{algorithmic}[1]
    \State{$g \gets G$}
    \State{$\mathcal{V} \gets \varnothing$}
    \While{$\textbf{true}$}
    \If{$\gennullable g \neq \varnothing$} $\textbf{return}~\nu g \cup \mathcal{V}$ \EndIf{}
    \If{$g = \textsf{Void}$} $g \gets G$ \EndIf{}
    \State{$\nabla g \gets \langle \genderiv_{c}g \mid c \in C \rangle $}\Comment{$\nabla g$ is the
      gradient of $g$}
    \For{$\genderiv_{c}g \in \nabla g$}
    \If{$\genderiv_{c}g = \textsf{Void}$}
    \State{$V \gets \varnothing$}
    \Else{}
    \State{$x_1, \dots, x_N \leftsquigarrow \valueDist{\genderiv_{c}g}$}\Comment{Sample $\valueDist{\genderiv_{c}g}$}
    \State{$V \gets \{x_j \mid
      \varphi(x_j)\}$}
    \EndIf{}
    \State{$f_{c} \gets |V|$}\Comment{$f_{c}$ is the {\em fitness\/} of c}
    \State{$\mathcal{V} \gets \mathcal{V} \cup V$}
    \EndFor{}
    \If{$\max_{c \in C} f_{c} = 0$} \For{$c \in C$} $f_{c} \gets 1$ \EndFor{} \EndIf{}
    \State{$g \leftsquigarrow \textsf{weightedChoice}~\{(f_{c}, \genderiv_{c}g) \mid c \in C \}$}
    \EndWhile{}
  \end{algorithmic}
  \caption{\fullalgo{}:
    Given a free generator $G$ in \kl{simplified
      form}, a sample rate constant $N$,
    and a validity predicate $\varphi$,
    this algorithm produces a set of outputs
    that all satisfy $\varphi(x)$.
  }\label{alg:gen-valid}
\end{figure}

The intuition from earlier\bcp{where?} plays out in lines
7--14, and is shown pictorially in
Figure~\ref{fig:generation-algorithm}. We take the
gradient of $g$ by taking the derivative with
respect to each possible choice, in this case
\cc{a}, \cc{b}, and \cc{c}. Then we evaluate each
of the derivatives by interpreting the free generator
with $\valueDist{\cdot}$, sampling values from the
resulting generator, and counting how many of
those results are valid with respect to $\varphi$.
The precise number of samples is controlled by
$N$, the sample rate constant; this is up to the
user, but in general higher values for $N$ will
give better information about each derivative at
the expense of time spent sampling. At the end of
sampling, we have values $f_{\cc{a}}$,
$f_{\cc{b}}$, and $f_{\cc{c}}$, which we can think
of as the ``fitness'' of each choice. We then pick
a choice randomly, weighted based on fitness, and
continue until our choices produce a valid output.

\begin{figure}[h]
  \centering
  \includegraphics[width=0.8\columnwidth]{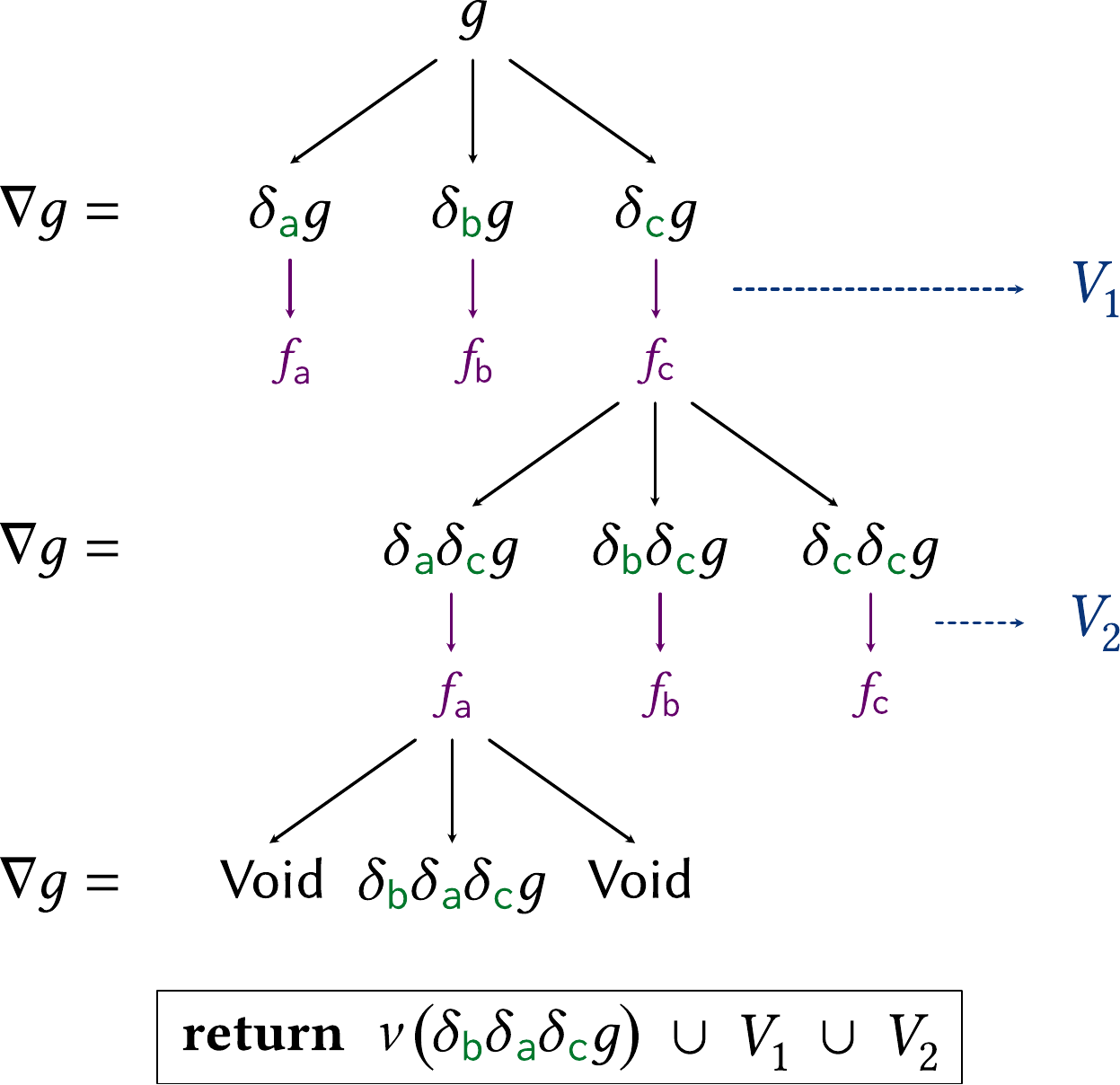}
  \caption{The main loop of \fullalgo{}.}\label{fig:generation-algorithm}
\end{figure}

Critically, we avoid wasting effort by saving the
samples ($\mathcal{V}$) that we use to evaluate
the gradients. Many of those samples will be valid
results that we can use, so there is no reason to
throw them away.

\SUBSECTION{Modified Distributions}\label{subsec:modified-dist}
Interestingly, this algorithm works equally well
for free generators whose distributions have been
replaced (as discussed in
\S\ref{subsec:replacing-dist}). Recall that we
modify the distribution of a free generator $g$ by
pairing it with a pair of a history $h$ and a
distribution function $d$. We can define the
derivative of such a structure to be:
\begin{lstlisting}
  $\genderiv_{c}$((h, d), g) = ((h $\cdot$ c, d), $\genderiv_{c}$g)
\end{lstlisting}
We take the derivative of $g$ and internalize $c$
into the distribution's history. Furthermore, we
can say that a modified generator's nullable set
is the same as the nullable set of the underlying
generator.

These definitions, along with the ones in
\S\ref{subsec:replacing-dist}, are enough to
replicate \fullalgo{} for a free generator with an
external distribution.

\section{Exploratory Evaluation}\label{sec:evaluation}
This paper is primarily about the theory of
\kl[free generator]{free generators} and their
\kl(GEN)[derivative]{derivatives}, but readers may be curious (as we were)
to see how well \kl[cgs]{\fullalgo{}}
performs on a few property-based testing
benchmarks.  In this section, we describe some preliminary experiments in this
direction; the results suggest that, with a few interesting caveats,
\shortalgo{} is a promising approach to the
valid generation problem.

\SUBSECTION{Experimental Setup}
Our experiments explore how well \shortalgo{}
improves on the base, na\"\i ve generator by
comparing it to {\em rejection
sampling}, which takes a na\"\i ve
generator, samples from it, and discards
any results that are not valid. Rejection sampling
is the
default method that {\sc QuickCheck} uses for
properties with preconditions when no bespoke
generator is available, and makes for a clean
baseline that we can compare \shortalgo{} to.

We use four simple free generators to test four
different benchmarks: \bench{BST}, \bench{SORTED},
\bench{AVL}, and \bench{STLC}. Details about
each of these benchmarks are given in
Table~\ref{table:benchmark-detail}.

\begin{table*}[h]
  \begin{tabular}{lllrr}
    & Free Generator & Validity Condition & \multicolumn{1}{l}{$N$} & \multicolumn{1}{l}{Depth} \\\midrule
    \bench{BST} & Binary trees with values 0--9 & Is a valid BST & 50 & 5 \\
    \bench{SORTED} & Lists with values 0--9 & Is sorted & 50 & 20 \\
    \bench{AVL} & Binary trees with values and stored heights 0--9 & Is a valid AVL tree (balanced) & 500 & 5 \\
    \bench{STLC} & Arbitrary ASTs for $\lambda$-terms & Is well-typed & 400 & 5
  \end{tabular}
  \caption{Overview of benchmarks.}\label{table:benchmark-detail}
\end{table*}

Each of our benchmarks requires a simple free
generator to act as a baseline and as a starting
point for \shortalgo{}. For consistency, and to
avoid potential biases, our generators follow the
respective inductive data types as closely as
possible. For example, {\sf fgenTree}, shown in
\S\ref{sec:free} and used in the \bench{BST}
benchmark, follows the structure of {\sf Tree}
exactly. We chose values for $N$ via trial and
error in order to balance fitness accuracy with
sampling time.

\SUBSECTION{Results}
We ran \shortalgo{} and {\sc Rejection} on each
benchmark for one minute (on a MacBook Pro with an
M1 processor and 16GB RAM) and recorded the
unique valid values produced. We counted unique
values because duplicate tests are generally less useful than fresh ones (in
property-based testing of pure programs, in particular, duplicate tests add no
value). The totals, averaged over 10 trials, are
presented in Table~\ref{table:unique-over-time}.

\begin{table}[h]
  \begin{tabular}{lrrrr}
    & \multicolumn{1}{l}{\bench{BST}} & \multicolumn{1}{l}{\bench{SORTED}} & \multicolumn{1}{l}{\bench{AVL}} & \multicolumn{1}{l}{\bench{STLC}} \\\midrule
    {\sc Rej.} & \num{9729} (\num{103}) & \num{6587} (\num{125}) & \num{156} (\num{5}) & \num{105602} (\num{2501}) \\
    \shortalgo{} & \num{22349} (\num{416}) & \num{58656} (\num{881}) & \num{220} (\num{1}) & \num{297703} (\num{11726})
  \end{tabular}
  \caption{Unique valid values generated in 60
    seconds ($n = 10$ trials).}\label{table:unique-over-time}
\end{table}

These measurements show that \shortalgo{}
is always able to generate more unique values than
{\sc Rejection} in the same amount of time, and it
often generates {\em significantly\/} more. (The exception is the \bench{AVL}
benchmark; we discuss this below.)

Besides unique values, we measured some
other metrics; the charts in
Figure~\ref{fig:stlc-stats} give some deeper
insights for the \bench{STLC} benchmark. The first plot (``Unique Terms over
Time'') shows that, after one minute, \shortalgo{}
has not yet begun to ``run out'' of unique terms to
generate. Additionally ``Normalized Size
Distribution'' chart shows that \shortalgo{} also
generates larger terms on average. This is good
from the perspective of property-based testing,
where test size is often positively correlated
with bug-finding power, since larger test inputs
tend to exercise more of the implementation code.
Charts for the remaining benchmarks are in
Appendix~\ref{appendix:full-results}.

\begin{figure}[h]
  \centering
  \includegraphics[width=.95\columnwidth]{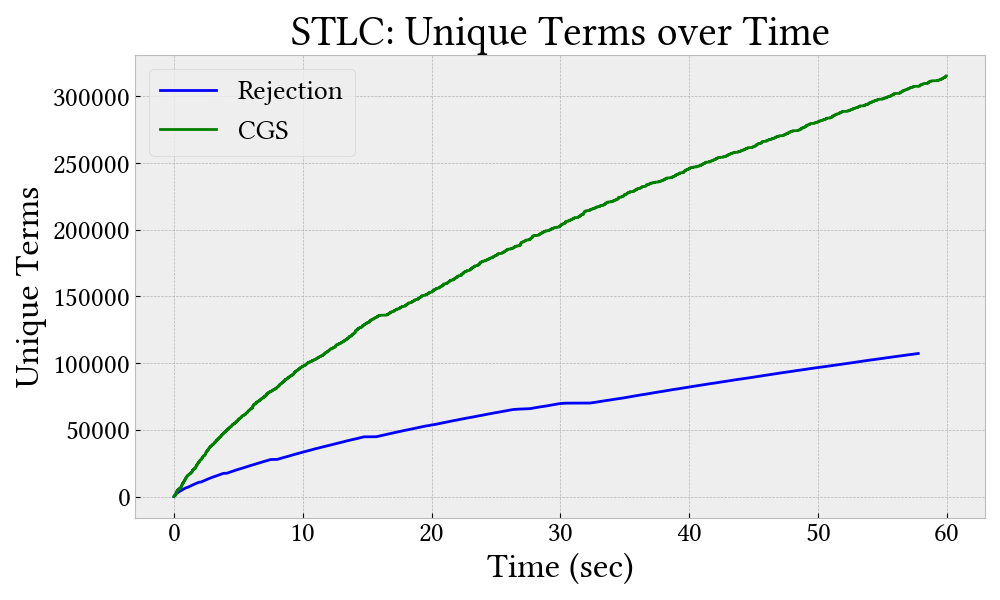}
  \includegraphics[width=.95\columnwidth]{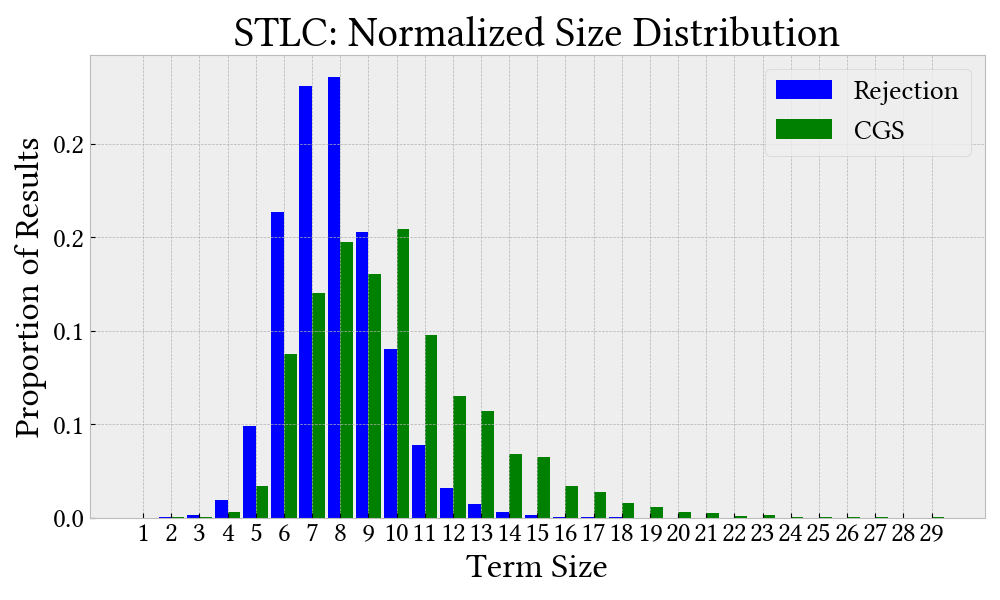}
  \caption{Unique values and term sizes for the
    \bench{STLC} benchmark (first trial).}\label{fig:stlc-stats}
\end{figure}

\SUBSECTION{Measuring Diversity}
When testing, we care about more than
just the number of valid test inputs generated in
a period of time---we care about the {\em
diversity\/} of those inputs, since a more diverse test
suite will find more bugs more quickly.

Our {\em diversity metric\/} relies on
the fact that each value is roughly isomorphic to
the choice sequence that generated it. For
example, in the case of \bench{BST}, the sequence
\cc{n5l6ll} can be parsed to produce
\lstinline{Node 5 Leaf (Node 6 Leaf Leaf)} and a
simple in-order traversal can recover \cc{n5l6ll}
again. Thus,
choice sequence diversity is a reasonable proxy for value
diversity.

We estimated the average Levenshtein
distance~\cite{levenshtein1966binary} (the number
of edits needed to turn one string into another) between
pairs of choice sequences in the values generated
by each of our algorithms. Computing an exact mean distance between all pairs in such a large set
would be very expensive, so we settled for the mean of a
random sample of 3000 pairs from each set of valid
values. The results are summarized in
Table~\ref{table:lev-distance}.

\begin{table}[h]
  \begin{tabular}{lrrrr}
    & \multicolumn{1}{l}{\bench{BST}} & \multicolumn{1}{l}{\bench{SORTED}} & \multicolumn{1}{l}{\bench{AVL}} & \multicolumn{1}{l}{\bench{STLC}} \\\midrule
    {\sc Rej.} & $7.70 (1.71)$ & $4.80 (1.15)$ & $4.42 (2.01)$ & $12.24 (4.55)$ \\
    \shortalgo{} & $8.89 (1.95)$ & $7.28 (1.92)$ & $4.35 (1.98)$ & $13.62 (4.72)$
  \end{tabular}
  \caption{Average Levenshtein distance between
pairs of choice
sequences (first trial).}\label{table:lev-distance}
\end{table}

While \bench{SORTED} does see significantly
improved diversity, the effect is less dramatic
\bench{STLC} and \bench{BST}, and diversity for
\bench{AVL} actually gets slightly worse.

One explanation for these lackluster results rests
on the way \shortalgo{} retains intermediate
samples. While the first few samples will be
mostly uncorrelated, the samples drawn later on in
the generation process (once a number of choices
have been fixed) will tend to be similar to one
another. This likely results in some clusters of
inputs that are all valid but that only explore
one particular shape of input.

Of course, is already common practice to test
clusters of similar inputs in certain fuzzing
contexts~\cite{DBLP:journals/pacmpl/Lampropoulos0P19},
so the fact that \shortalgo{} does this is not
unusual. In fact, this method has been shown to be
effective at finding bugs in some cases.
Additionally, for most of our benchmarks (again,
we return to \bench{AVL} in a moment) \shortalgo{}
does increase diversity of tests; combined with
the sheer number of valid inputs available, this
means that \shortalgo{} covers a slightly larger
space of tests much more thoroughly. This effect
should lead to better bug-finding in testing
scenarios.

\SUBSECTION{The Problem with \bench{AVL}}\label{subsec:problem-avl}
The \bench{AVL} benchmark is an outlier in most of
these measurements: \shortalgo{} only manages to
find a modest number of extra valid AVL trees, and
their pairwise diversity is actually slightly
worse than that of rejection sampling. Why might
this be? We suspect that this effect arises
because AVL trees are quite difficult to find
randomly. Balanced binary search trees are hard to
generate on their own, and AVL trees are even more
difficult because the generator must guess the
correct height to cache at each node. This is why
rejection sampling only finds $\num{156}$ AVL
trees in the time it takes to find $\num{9762}$
binary search trees.

This all means that \shortalgo{} is unlikely find
{\em any\/} valid trees while sampling. In
particular, the check in line 15 of
Figure~\ref{fig:generation-algorithm} will often
be true, meaning that choices will be made
uniformly at random rather than guided by the
fitness of the appropriate derivatives. We could
reduce this effect by significantly increasing the
sample rate constant $N$, but then sampling time
would likely dominate generation time, resulting
in worse performance overall.

The lesson here seems to be that the \shortalgo{}
algorithm does not work well with especially
hard-to-satisfy predicates. In \S\ref{sec:future},
we present an idea that would do some of the hard
work ahead of time and help with this issue, but
clearly many predicates (including complex ones
like well-typedness of STLC terms) are within
reach of the current algorithm. Indeed, as long as
every $N$ samples from the na\"\i ve generator has
at least a few valid values on average, the
\bench{AVL} issue will not come up. We expect that
many real-world structural and semantic
constraints will require a small enough $N$ for
\shortalgo{} to be effective.

\section{Related Work}\label{sec:related}
We discuss a number of approaches that are similar
to ours, via either connections to free generators
or connections to our \fullalgo{} algorithm.

\SUBSECTION{Parsing and Generation}
The connection between parsers and generators is
not just ``intellectual folklore''---it is used in
some implementations too. At least two popular
property-based testing libraries, {\sc
Hypothesis}~\cite{maciver2019hypothesis} and {\sc
Crowbar}~\cite{dolan2017testing}, implement
generators by parsing a stream of random bits (and
there may very well be others that we do not know
of.) This further illustrates the value in
formalizing the connection between parsers and
generators, as a way to explain existing
implementations uncover potential opportunities.

The {\sc Clotho}~\cite{darragh2021clotho} library
introduces ``parametric randomness,'' providing a
way to carefully control generator choices from
outside of the generator. While
\citeauthor{darragh2021clotho} do not use parsing
in their formalism, it is still exciting to see
others considering the implications of controlling
generator choices externally.

\SUBSECTION{Free Applicative Generators}
\citet{ClaessenDP15} present a generator
representation that is structurally similar to our
free generators, but which is used in a very
different way. They primarily use the syntactic
structure of their generators (they call them
``spaces'') to control the size distribution of
generated outputs; in particular, spaces do not
make choice information explicit in the way free
generators do. \citeauthor{ClaessenDP15}'s
generation approach uses {\sc Haskell}'s laziness,
rather than derivatives and sampling, to prune
unhelpful paths in the generation process. This
pruning procedure performs well when validity
conditions are written to take advantage of
laziness, but it is highly dependent on evaluation
order and it does not differentiate between
generator choices that are not obviously bad. In
contrast, \shortalgo{} respects observational
equivalence between predicates and uses sampling
to weight next choices.

\SUBSECTION{The Valid Generation Problem}
Many other approaches to the valid generation
problem have been explored.

The domain-specific language for generators provided by the {\sc
QuickCheck} library~\cite{hughes2007quickcheck} makes it
easier to write manual generators that produce
valid inputs by construction. This approach is
extremely general, but it can be labor intensive.
In the present work, we avoid manual techniques like this in the hopes
of making property-based testing
more accessible to programmers that do not have the time or expertise to
write their own custom generators.

The {\sc Luck}~\cite{LuckPOPL} language provides a sort of middle-ground
solution; users are still required to put in some effort, but they are able to
define generators and validity predicates at the same time. {\sc Luck} provides
a satisfying solution if users are starting from scratch and willing to learn a
domain-specific language, but if validity predicates have already been written
or users do not want to learn a new language, a more automated
solution is preferable.

When validity predicates are expressed as
inductive relations, approaches like the one in
{\em Generating Good Generators for Inductive
Relations\/}~\cite{lampropoulos2017generating} are
extremely powerful. Unfortunately, most
programming languages cannot express inductive
relations that capture the kinds of preconditions
that we care about.

{\sc Target}~\cite{loscher2017targetedpbt} uses
search strategies like hill-climbing and simulated
annealing to supplement random generation and
significantly streamline property-based testing.
\citeauthor{loscher2017targetedpbt}'s approach
works extremely well when inputs have a sensible
notion of ``utility,'' but in the case of valid
generation the utility is often degenerate---0 if
the input is invalid, and 1 if it is valid---with
no good way to say if an input is ``better'' or
``worse.'' In these cases, derivative-based
searches may make more sense.

Some approaches use machine learning to
automatically generate valid inputs. {\sc
Learn\&Fuzz}~\cite{godefroid2017learn} generates
valid data using a recurrent neural network. While
the results are promising, this solution seems to
work best when a large corpus of inputs is already
available and the validity condition is more
structural than semantic. In the same vein, {\sc
RLCheck}~\cite{DBLP:conf/icse/ReddyLPS20} uses
reinforcement learning to guide a generator to
valid inputs. This approach served as early
inspiration for our work, and we think that
the theoretical advance of generator derivatives
may lead improved learning algorithms in the
future (see \S\ref{sec:future}).

\section{Future Directions}\label{sec:future}
There are a number of exciting paths forward from
this work; some continue our theoretical
exploration and others look towards algorithmic
improvements.

\SUBSECTION{Bidirectional Free Generators}
We believe that we have only scratched the surface
of what is possible with \kl[free generator]{free
generators}. One concrete next step is to merge
the theory of free generators with the emerging
theory of {\em
ungenerators\/}~\cite{goldstein2021ungenerators}.
\citeauthor{goldstein2021ungenerators} expresses
generators that can be run both forward (to
generate values as usual) and {\em backward}. In
the backward direction, the program takes a value
that the generator might have generated and
``un-generates'' it to give a sequence of choices
that the generator might have made when generating
that value.

Free generators are quite compatible with these
ideas, and turning a free generator into a
bidirectional generator that can both generate and
ungenerate should be fairly straightforward. From
there, we can build on the ideas in the
ungenerators work and use the backward direction
of the generator to learn a distribution of
choices that approximates some user-provided
samples of ``desirable'' values. Used in
conjunction with the extended algorithm from
\S\ref{subsec:modified-dist}, this would give a
better starting point for generation with little
extra work from the user.

\SUBSECTION{Algorithmic Optimizations}
In \S\ref{subsec:problem-avl}, we saw some
problems with the \kl[cgs]{\fullalgo{}} algorithm:
because \shortalgo{} evaluates
derivatives via sampling, it does poorly when
validity conditions are particularly difficult to
satisfy. This begs the question: might it be
possible to evaluate the fitness of a derivative
without na\"\i vely sampling?

One potential angle involves staging the sampling
process. Given a free generator with a depth
parameter, we can first evaluate choices on
generators of size 1, then evaluate choices with
size 2, etc. These intermediate stages would make
gradient sampling more successful at larger sizes,
and might significantly improve the results on
benchmarks like \bench{AVL}. Unfortunately, this
kind of approach might perform poorly on
benchmarks like \bench{STLC} where the validity
condition is not uniform: size-1 generators would
avoid generating variables, leading larger
generators to avoid variables as well. In any
case, we think this design space is worth
exploring.

\SUBSECTION{Making Choices with Neural Networks}
Another algorithmic optimization is a bit farther
afield: we think it may be possible to use
recurrent neural networks (RNNs) to improve our
generation procedure.

As \fullalgo{} makes choices, it generates useful
data about the frequencies with which choices
should be made. Specifically, every iteration of
the algorithm produces a pair of a history and a
distribution over next choices that looks
something like
\[
  \cc{abcca} \mapsto \{ \cc{a}: 0.3, \cc{b}: 0.7, \cc{c}: 0.0 \} .
\]
In the course of \shortalgo{}, this information is
used once (to make the next choice) and then
forgotten---what if there was a way to learn from
it? Pairs like this could be used to train an RNN
to make choices that are similar to the ones made
by \shortalgo{}.

There are still details to work out, including
network architecture, hyper-parameters, etc., but
in theory we could run \shortalgo{} for a while,
then train the model, and after that point only
use the RNN to generate valid data. Setting things
up this way would recover some of the time that is
currently wasted by the constant sampling of
derivative generators.

One could imagine a user writing a definition of a
type and a predicate for that type, and then
setting the model to train while they work on
their algorithm. By the time the algorithm is
finished and ready to test, the RNN model would be
trained and ready to produce valid test inputs. A
workflow like this could significantly increase
adoption of property-based testing in industry. \\

Free generators and their derivatives are powerful
structures that give a unique and flexible
perspective on random generation. Our formalism
yields a useful algorithm and clarifies the
folklore that a generator is a parser of
randomness.


\bibliography{references}

\clearpage
\appendix
\onecolumn
\noindent {\LARGE \bf Appendix}
\section{Proof of Lemma~\ref{lem:norm-null}}\label{appendix:norm-null}
\begin{repeat-lemma}{\ref{lem:norm-null}}
  If a \kl{free generator} $g$ is in \kl{simplified form}, then either $g =
  \text{\lstinline{Pure a}}$ or $\varepsilon \notin \lang{g}$.
\end{repeat-lemma}
\begin{proof}
  We proceed by induction on the structure of $g$.
  \begin{itemize}
  \item[Case] $g = \textsf{Void}$. Trivial.
  \item[Case] $g = \textsf{Pure a}$. Trivial.
  \item[Case] $g = \textsf{Pair x y}$.
    \noindent By our inductive hypothesis, $x = \textsf{Pure a}$ or $\varepsilon
    \notin \lang{\textsf{x}}$.

    \noindent Since the smart constructor $\otimes$ never constructs a $\textsf{Pair}$ with
    $\textsf{Pure}$ on the left, it must be that $\varepsilon \notin \lang{\textsf{x}}$.

    \noindent Therefore, it must be the case that $\varepsilon \notin \lang{\textsf{Pair x y}}$.
  \item[Case] $g = \textsf{Map f x}$.

    \noindent Similarly to the previous case, our inductive hypothesis and simplification assumptions
    imply that $\varepsilon \notin \lang{\textsf{x}}$.

    \noindent Therefore, $\varepsilon \notin \lang{\textsf{Map f y}}$.
  \item[Case] $g = \textsf{Select xs}$.

    \noindent It is always the case that $\varepsilon \notin \lang{\textsf{Select xs}}$.
  \end{itemize}
  Thus, we have shown that every simplified free generator is either \lstinline{Pure a} or
  cannot accept the empty string.
\end{proof}

\clearpage
\section{Proof of Theorem~\ref{thm:coherence}}\label{appendix:coherence}
\begin{lemma}\label{lem:parser-gen-pair}
  Pairing two \kl[parser interpretation]{parser
    interpretations} and mapping over the
  concatenation of the associated choice
  distributions is equal to a function of the two
  parsers mapped over the distributions
  individually. Specifically, for all
  \kl[simplified form]{simplified}
  \kl[free generator]{free generators} {\sf x} and {\sf y},
  \begin{align*}
    ((\lambda x\ y \to (x, y)) \fmap \parser{\textsf{x}} \ap \parser{\textsf{y}}) \fmap ((\cdot) \fmap \choiceDist{\textsf{x}} \ap \choiceDist{\textsf{y}})
    &\gequiv (\lambda a_\bot\ b_\bot \to \textbf{\textsf{case}}~(a_\bot, b_\bot)~\textbf{\textsf{of}} \\
    &\quad\quad(\textsf{Just}~(a, \_), \textsf{Just}~(b, \_)) \to \textsf{Just}~((a, b), \varepsilon) \\
    &\quad\quad \_ \to \textsf{Nothing}) \\
    &\quad \fmap (\parser{\textsf{x}} \fmap \choiceDist{\textsf{x}}) \ap (\parser{\textsf{y}} \fmap \choiceDist{\textsf{y}})
  \end{align*}
\end{lemma}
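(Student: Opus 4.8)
The plan is to notice that both sides of the claimed equivalence are \emph{deterministic} post-processings of the \emph{same} joint sampling. Writing out the outer operations, the left-hand side samples $s$ from $\choiceDist{\textsf{x}}$ and $t$ from $\choiceDist{\textsf{y}}$ independently, forms $s \cdot t$, and applies the paired parser to it; the right-hand side samples the same $s$ and $t$ but runs $\parser{\textsf{x}}$ and $\parser{\textsf{y}}$ separately before combining the two results with the \textbf{case} function. Since the support of $\choiceDist{\textsf{z}}$ is exactly $\lang{\textsf{z}}$ (an easy induction), it suffices to prove that for every $s \in \lang{\textsf{x}}$ and $t \in \lang{\textsf{y}}$ the two post-processing functions of $(s,t)$ produce the same value; distributional equivalence then follows, because the same product distribution over $(s,t)$ is pushed forward along two functions that agree pointwise.

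The engine of the argument is the sub-lemma flagged in the proof sketch of Theorem~\ref{thm:coherence}, which I would state in a \emph{suffix-uniform} form: for every simplified free generator $\textsf{z}$ and every $s \in \lang{\textsf{z}}$, there is a \emph{single} value $a$ such that $\parser{\textsf{z}}\,(s \cdot t) = \textsf{Just}\,(a, t)$ for \emph{all} strings $t$. I would prove this by induction on the structure of $\textsf{z}$. The $\textsf{Pure v}$ case gives $a = v$; $\textsf{Map f w}$ applies $f$ to the value supplied by the hypothesis on $\textsf{w}$; $\textsf{Select xs}$ peels the leading tag $c$ off $s = c \cdot s'$ and dispatches to the unique sub-generator indexed by $c$ (uniqueness of tags coming from the \textsf{select} smart constructor), then invokes the hypothesis there; and $\textsf{Pair w w'}$ splits $s = u \cdot v$ with $u \in \lang{\textsf{w}}$ and $v \in \lang{\textsf{w'}}$, applies the hypothesis on $\textsf{w}$ with trailing suffix $v \cdot t$ to consume exactly $u$, then the hypothesis on $\textsf{w'}$ with suffix $t$ to consume exactly $v$, leaving $t$ and the value $(a_w, a_{w'})$. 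The $\textsf{Void}$ case is vacuous, and by Lemma~\ref{lem:norm-not-void} it cannot even occur inside a non-$\textsf{Void}$ simplified component. The crucial observation is that the value never depends on the trailing $t$, so a single $a$ works uniformly.

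With the sub-lemma in hand the reduction is mechanical. On the left, running $\parser{\textsf{x}}$ on $s \cdot t$ consumes exactly $s$ and returns $\textsf{Just}\,(a, t)$; the paired parser then runs $\parser{\textsf{y}}$ on the leftover $t$, which (taking trailing suffix $\varepsilon$) returns $\textsf{Just}\,(b, \varepsilon)$, so the whole thing is $\textsf{Just}\,((a,b), \varepsilon)$. On the right, $\parser{\textsf{x}}\,s = \parser{\textsf{x}}\,(s \cdot \varepsilon) = \textsf{Just}\,(a, \varepsilon)$ and $\parser{\textsf{y}}\,t = \textsf{Just}\,(b, \varepsilon)$ with the \emph{same} $a$ and $b$ --- this is exactly where the suffix-uniform formulation pays off --- so both arguments of the \textbf{case} function are $\textsf{Just}$ and it, too, yields $\textsf{Just}\,((a,b), \varepsilon)$. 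The two functions therefore agree on every $(s,t)$ in the support, which establishes the equivalence.

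I expect the $\textsf{Pair}$ case of the sub-lemma to be the main obstacle: one must argue that the first parser consumes \emph{precisely} the prefix $u$ generated by the first choice distribution, leaving exactly the boundary at which the second parser begins, rather than over- or under-consuming. Determinism of the parser rules out genuine ambiguity, since two valid splits $u_1 \cdot v_1 = u_2 \cdot v_2$ would force equal leftovers and hence $u_1 = u_2$; but spelling this out carefully, together with threading the ``value is independent of the suffix'' invariant through both halves of the pair, is the delicate part of the induction.
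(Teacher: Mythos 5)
Your proposal is correct and follows essentially the same route as the paper's proof: both hinge on the sub-lemma that for any simplified $g$ and any $s$ produced by $\choiceDist{g}$, $\parser{g}\,(s \cdot t) = \textsf{Just}\,(a, t)$, and then evaluate both sides pointwise on each pair $(s, t)$ in the support to get $\textsf{Just}\,((a,b), \varepsilon)$. Your suffix-uniform phrasing (a single $a$ working for all trailing $t$) is a slight sharpening that the paper leaves implicit but clearly relies on, and your case analysis of the sub-lemma's induction is more detailed than the paper's one-line appeal to induction.
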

\begin{proof}
  First, note that for any simplified generator, $g$, if $\choiceDist{g}$ generates a string $s$,
  for any other string $t$ $\parser{g}~(s \cdot t) = \textsf{Just}~(a, t)$ for some value $a$. This
  can be shown by induction on the structure of $g$.

  \noindent Now, assume $\choiceDist{\textsf{x}}$ generates a string $s$, and
  $\choiceDist{\textsf{y}}$ generates $t$. This means that $(\cdot) \fmap \choiceDist{\textsf{x}} \ap
  \choiceDist{\textsf{y}}$ generates $s \cdot t$.

  \noindent By the above fact, it is simple to show that both sides of the above equation simplify
  to $\textsf{Just}~((a, b), \varepsilon)$ for some values $a$ and $b$ that depend on the particular
  interpretations of {\sf x} and {\sf y}.

  \noindent Since this is true for any $s$ and $t$ that the choice distributions generate, the
  desired fact holds.
\end{proof}

\begin{repeat-theorem}{\ref{thm:coherence}}
  Every \kl[simplified form]{simplified} \kl{free
    generator} can be factored into a parser and a
  distribution over choice sequences. In other
  words, for all simplified free generators $g \neq
  \textsf{Void}$,
  \[
    \parser{g} \fmap \choiceDist{g} \gequiv (\lambda x \to \textsf{Just}~(x, \varepsilon)) \fmap
    \valueDist{g}.
  \]
\end{repeat-theorem}
\begin{proof}
  We proceed by induction on the structure of $g$.
  \begin{itemize}
  \item[Case] $g = \textsf{Pure a}$.
    \begin{align*}
      \parser{\textsf{Pure a}} \fmap \choiceDist{\textsf{Pure a}} &\gequiv \textsf{pure}~(\textsf{Just}~(\textsf{a}, \varepsilon))
      && \text{(by defn)} \\
                                                                 &\gequiv (\lambda x \to \textsf{Just}~(x, \varepsilon))
                                                                   \fmap \valueDist{\textsf{Pure a}}
      && \text{(by defn)}
    \end{align*}
  \item[Case] $g = \textsf{Pair x y}$.
    \begin{align*}
      \parser{\textsf{Pair x y}} \fmap \choiceDist{\textsf{Pair x y}} &\gequiv ((\lambda x\ y \to (x, y)) \fmap \parser{\textsf{x}} \ap \parser{\textsf{y}}) \fmap ((\cdot) \fmap \choiceDist{\textsf{x}} \ap \choiceDist{\textsf{y}})
      && \text{(by defn)} \\
                                                                      &\gequiv (\lambda a_\bot\ b_\bot \to \textbf{\textsf{case}}~(a_\bot, b_\bot)~\textbf{\textsf{of}} \\
                                                                      &\quad\quad(\textsf{Just}~(a, \_), \textsf{Just}~(b, \_)) \to \textsf{Just}~((a, b), \varepsilon) \\
                                                                      &\quad\quad \_ \to \textsf{Nothing}) \\
                                                                      &\quad \fmap (\parser{\textsf{x}} \fmap \choiceDist{\textsf{x}}) \ap (\parser{\textsf{y}} \fmap \choiceDist{\textsf{y}})
      && \text{(by Lemma~\ref{lem:parser-gen-pair})} \\
                                                                      &\gequiv (\lambda a_\bot\ b_\bot \to \textbf{\textsf{case}}~(a_\bot, b_\bot)~\textbf{\textsf{of}} \\
                                                                      &\quad\quad(\textsf{Just}~(a, \_), \textsf{Just}~(b, \_)) \to \textsf{Just}~((a, b), \varepsilon) \\
                                                                      &\quad\quad \_ \to \textsf{Nothing}) \\
                                                                      &\quad \fmap ((\lambda x \to \textsf{Just}~(x, \varepsilon)) \fmap \valueDist{\textsf{x}}) \ap ((\lambda x \to \textsf{Just}~(x, \varepsilon)) \fmap \valueDist{\textsf{y}})
      && \text{(by IH)} \\
                                                                      &\gequiv (\lambda x \to \textsf{Just}~(x, \varepsilon)) \fmap ((\lambda x\ y \to (x, y)) \fmap \valueDist{\textsf{x}} \ap \valueDist{\textsf{y}})
      && \text{(by app.\ properties)} \\
                                                                      &\gequiv (\lambda x \to \textsf{Just}~(x, \varepsilon)) \fmap \valueDist{\textsf{Pair x y}}
      && \text{(by defn)}
    \end{align*}
  \item[Case] $g = \textsf{Map f x}$.
    \begin{align*}
      \parser{\textsf{Map f x}} \fmap \choiceDist{\textsf{Map f x}} &\gequiv (\textsf{f} \fmap \parser{\textsf{x}}) \fmap \choiceDist{x}
      && \text{(by defn)} \\
                                                                    &\gequiv \textsf{f} \fmap (\parser{\textsf{x}} \fmap \choiceDist{x})
      && \text{(by functor properties)} \\
                                                                    &\gequiv \textsf{f} \fmap ((\lambda x \to \textsf{Just}~(x, \varepsilon)) \fmap \valueDist{\textsf{x}})
      && \text{(by IH)} \\
                                                                    &\gequiv (\lambda x \to \textsf{Just}~(x, \varepsilon)) \fmap (\textsf{f} \fmap \valueDist{\textsf{x}})
      && \text{(by functor properties)} \\
                                                                      &\gequiv (\lambda x \to \textsf{Just}~(x, \varepsilon))
                                                                        \fmap \valueDist{\textsf{Map f x}}
      && \text{(by defn)}
    \end{align*}
  \item[Case] $g = \textsf{Select xs}$.
    \begin{align*}
      \parser{\textsf{Select xs}} \fmap \choiceDist{\textsf{Select xs}} &\gequiv (\textsf{choice}~(\textsf{map}~(\lambda (c, \textsf{x}) \to (c, \parser{\textsf{x}}))~\textsf{xs})) \\
                                                                        &\quad \fmap \textsf{oneof}~(\textsf{map}~(\lambda (c, \textsf{x}) \to (c \cdot) \fmap \choiceDist{\textsf{x}})~\textsf{xs})
                                                                        && \text{(by defn)} \\
                                                                        &\gequiv \textsf{oneof}~(\textsf{map}~(\lambda (c, \textsf{x}) \to \\
                                                                        &\quad\quad (\textsf{choice}~(\textsf{map}~(\lambda (c, \textsf{x}) \to (c, \parser{\textsf{x}}))~\textsf{xs})) \circ (c \cdot) \fmap \choiceDist{\textsf{x}} \\
                                                                        &\quad)~\textsf{xs})
                                                                        && \text{(by generator properties)} \\
                                                                        &\gequiv \textsf{oneof}~(\textsf{map}~(\lambda (\_, \textsf{x}) \to \parser{\textsf{x}} \fmap \choiceDist{\textsf{x}})~\textsf{xs})
                                                                        && \text{(by parser properties)} \\
                                                                        &\gequiv \textsf{oneof}~(\textsf{map}~(\lambda (\_, \textsf{x}) \to (\lambda x \to \textsf{Just}~(x, \varepsilon)) \fmap \valueDist{\textsf{x}})~\textsf{xs})
                                                                        && \text{(by IH)} \\
                                                                        &\gequiv (\lambda x \to \textsf{Just}~(x, \varepsilon)) \fmap \textsf{oneof}~(\textsf{map}~(\lambda (\_, \textsf{x}) \to \valueDist{\textsf{x}})~\textsf{xs})
                                                                        && \text{(by generator properties)} \\
                                                                        &\gequiv (\lambda x \to \textsf{Just}~(x, \varepsilon)) \fmap \valueDist{\textsf{Select xs}}
                                                                        && \text{(by defn)}
    \end{align*}
  \end{itemize}
  Thus, generators can be coherently factored into a parser and a distribution.
\end{proof}

\clearpage
\section{Proof of Theorem~\ref{thm:deriv-consistent}}\label{appendix:deriv-consistent}
\begin{repeat-theorem}{\ref{thm:deriv-consistent}}
  For all \kl[simplified form]{simplified}
  \kl[free generator]{free generators} $g$ and choices $c$,
  \[ \deriv_{c} \lang{g} = \lang{\genderiv_{c}g}. \]
\end{repeat-theorem}
\begin{proof}
  We again proceed by induction on the structure of $g$.
  \begin{itemize}
    \item[Case] $g = \textsf{Void}$. $\varnothing = \varnothing$.
    \item[Case] $g = \textsf{Pure a}$. $\varnothing = \varnothing$.
    \item[Case] $g = \textsf{Pair x y}$.
      \begin{align*}
        \deriv_{c} \lang{\textsf{Pair x y}} &= \deriv_{c} (\lang{\textsf{x}} \cdot \lang{\textsf{y}})
        && \text{(by defn)} \\
                                            &= \deriv_{c} (\lang{\textsf{x}}) \cdot \lang{\textsf{y}} + \nullable \lang{\textsf{x}} \cdot \deriv_{c} \lang{\textsf{y}}
        && \text{(by defn)} \\
                                                 &= \deriv_{c} (\lang{\textsf{x}}) \cdot \lang{\textsf{y}}
        && \text{(by Lemma~\ref{lem:norm-null})} \\
                                                 &= \lang{\genderiv_{c} \textsf{x}} \cdot \lang{\textsf{y}}
        && \text{(by IH)} \\
                                                 &= \lang{\textsf{Pair}~(\genderiv_{c} \textsf{x})~\textsf{y}}
        && \text{(by defn)} \\
                                                 &= \lang{\genderiv_{c}\textsf{Pair x y}}
        && \text{(by defn)}
      \end{align*}
    \item[Case] $g = \textsf{Map f x}$.
      \begin{align*}
        \deriv_{c} \lang{\textsf{Map f x}} &= \deriv_{c} \lang{\textsf{x}}
        && \text{(by defn)} \\
                                                &= \lang{\genderiv_{c} \textsf{x}}
        && \text{(by IH)} \\
                                                &= \lang{\textsf{Map f}~(\genderiv_{c} \textsf{x})}
        && \text{(by defn)} \\
                                                &= \lang{\genderiv_{c}(\textsf{Map f x})}
        && \text{(by defn*)}
      \end{align*}
      *Note that the last step follows because \lstinline{Map f x} is assumed to be simplified, so
      $x \neq \textsf{Pure a}$. This means that \lstinline{f fmap x} is equivalent to
      \lstinline{Map f x}.
    \item[Case] $g = \textsf{Select xs}$. If there is no pair \lstinline{(c, x)} in
      \textsf{xs}, then $\varnothing = \varnothing$. Otherwise,
      \begin{align*}
        \deriv_{c} \lang{\textsf{Select xs}} &= \deriv_{c} \{c \cdot s \mid s \in \lang{\textsf{x}}\} && \text{(by defn)} \\
                                             &= \lang{\textsf{x}} && \text{(by defn)} \\
                                             &= \lang{\genderiv_{c}(\textsf{Select xs})} && \text{(by defn)}
      \end{align*}
  \end{itemize}
  Thus we have shown that the symbolic derivative
  of free generators is compatible with the
  derivative of the generator's
  language.
\end{proof}
There is another proof of this theorem, suggested
by Alexandra Silva, which uses the fact that
$2^{\Sigma^*}$ is the final coalgebra, along with
the observation that {\sf FGen} has a $2 \times
{(-)}^{\Sigma}$ coalgebraic structure. This
approach is certainly more elegant, but it
abstracts away some helpful operational intuition.

\clearpage
\section{Proof of Theorem~\ref{thm:deriv-value-consistent}}\label{appendix:deriv-value-consistent}
\begin{repeat-theorem}{\ref{thm:deriv-value-consistent}}
  For all \kl[simplified form]{simplified}
  \kl[free generator]{free generators} $g$, choice sequences $s$, and choices
  $c$,
  \[ \parser{\genderiv_{c}{g}}~s = \parser{g}~(c \cdot s). \]
\end{repeat-theorem}
\begin{proof}
  For simplicity, we prove the point-free version
  of this claim, i.e.:
  \[ \parser{\genderiv_{c}{g}} = \parser{g} \circ (c \cdot) \]
  We proceed by induction on $g$.
  \begin{itemize}
  \item[Case] $g = \textsf{Void}$. $\textsf{Nothing} = \textsf{Nothing}$.
  \item[Case] $g = \textsf{Pure a}$. $\textsf{Nothing} = \textsf{Nothing}$.
  \item[Case] $g = \textsf{Pair x y}$.
    \begin{align*}
      \parser{\genderiv_{c}{(\textsf{Pair x y})}} &= \parser{\genderiv_{c}{\textsf{x}} \otimes \textsf{y}} && \text{(by defn)}\\
                                                  &= (\lambda x~y \to (x, y)) \fmap \parser{\genderiv_{c}{\textsf{x}}} \ap \parser{\textsf{y}} && \text{(by defn)} \\
                                                  &= ((\lambda x~y \to (x, y)) \fmap \parser{\textsf{x}} \ap \parser{\textsf{y}}) \circ (c \cdot) && \text{(by IH \& Lemma~\ref{lem:norm-null}*)} \\
                                                  &= \parser{\textsf{Pair x y}}~(c \cdot) && \text{(by defn)}
    \end{align*}
    *We can use Lemma~\ref{lem:norm-null} to show
    that \textsf{x} must consume at least one
    character. Thus, we can move the $c$ in the
    derivative out into the final string, and trust
    that \textsf{x} will consume it.
  \item[Case] $g = \textsf{Map f x}$.
    \begin{align*}
      \parser{\genderiv_{c}{(\textsf{Map f x})}} &= \parser{f \fmap \genderiv_{c}{\textsf{x}}} && \text{(by defn)}\\
                                               &= f \fmap \parser{\genderiv_{c}{\textsf{x}}} && \text{(by app.\ properties)}\\
                                               &= f \fmap (\parser{\textsf{x}} \circ (c \cdot)) && \text{(by IH)}\\
                                               &= (f \fmap \parser{\textsf{x}}) \circ (c \cdot) && \text{(by app.\ properties)}\\
                                               &= \parser{\textsf{Map f x}} \circ (c \cdot) && \text{(by defn)}
    \end{align*}
  \item[Case] $g = \textsf{Select xs}$.
    Since both the derivative and the parser
    simply choose the branch of the {\sf Select}
    corresponding to $c$, this case is trivial.
  \end{itemize}
\end{proof}

\clearpage
\section{Full Experimental Results}\label{appendix:full-results}
\begin{figure}[H]
  \centering
  \includegraphics[width=.45\columnwidth]{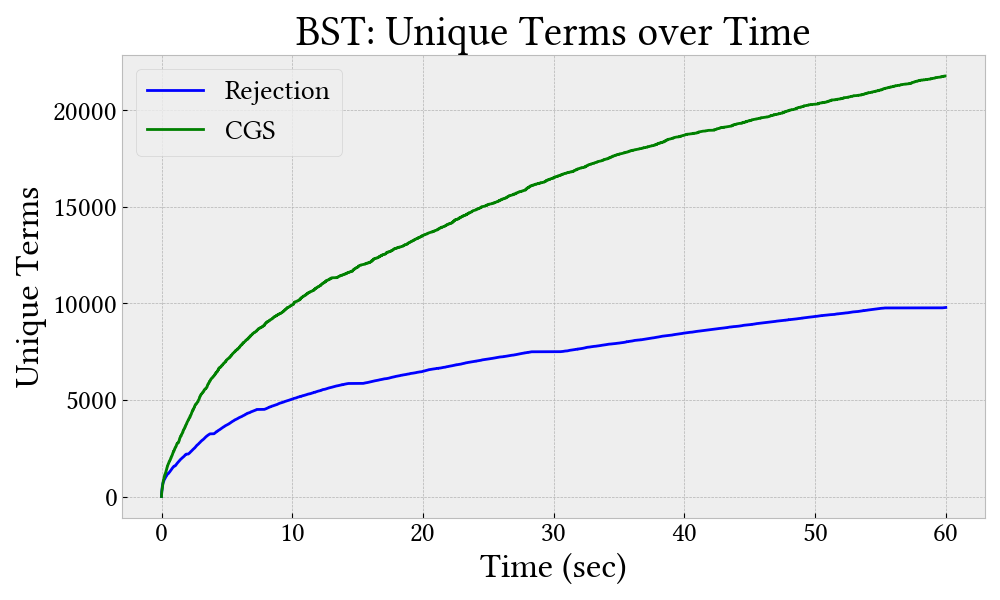}
  \includegraphics[width=.45\columnwidth]{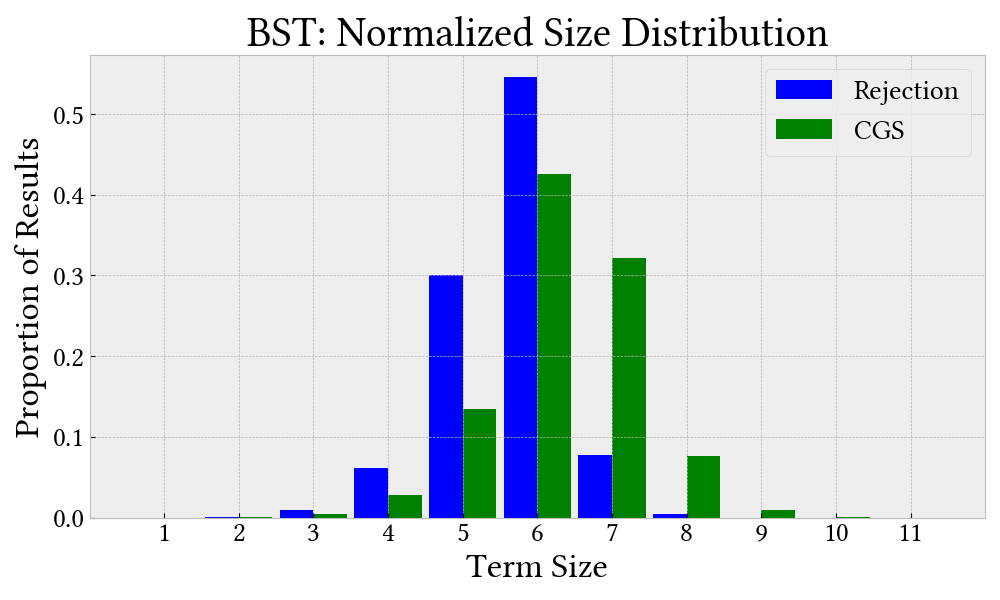}
  \caption*{\bench{BST} Charts}
\end{figure}

\begin{figure}[H]
  \includegraphics[width=.45\columnwidth]{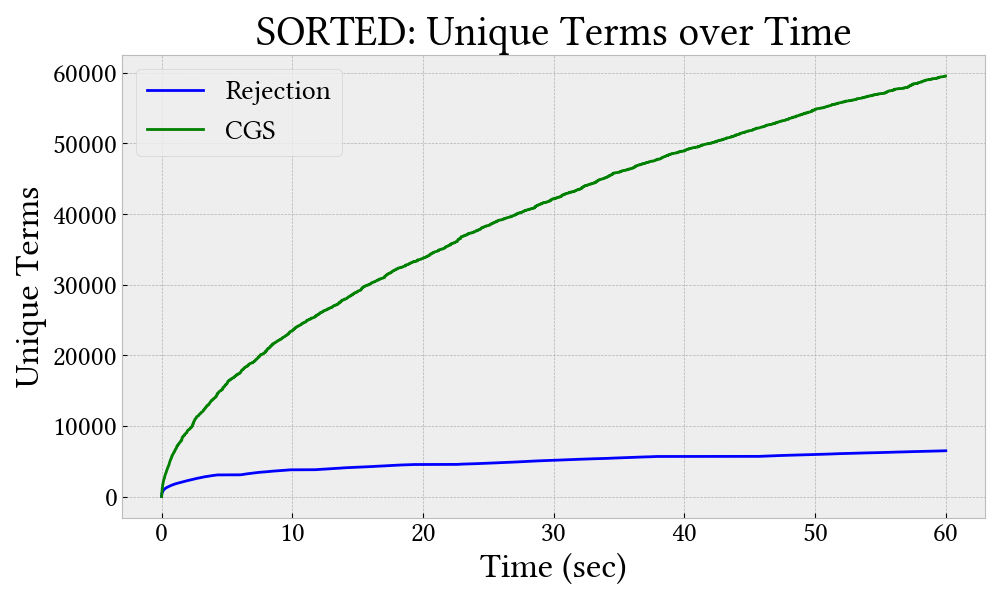}
  \includegraphics[width=.45\columnwidth]{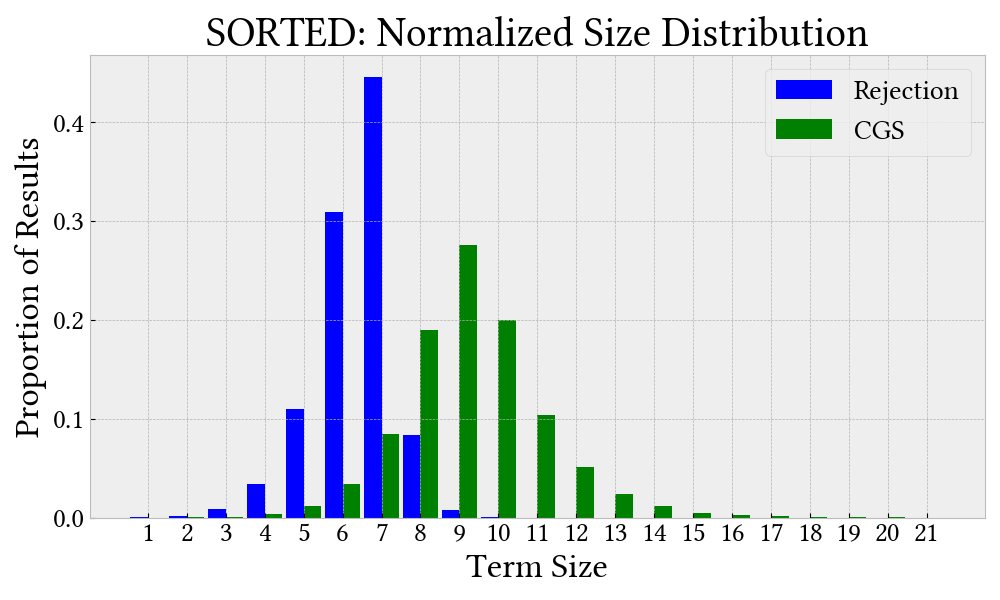}
  \caption*{\bench{SORTED} Charts}
\end{figure}

\begin{figure}[H]
  \centering
  \includegraphics[width=.45\columnwidth]{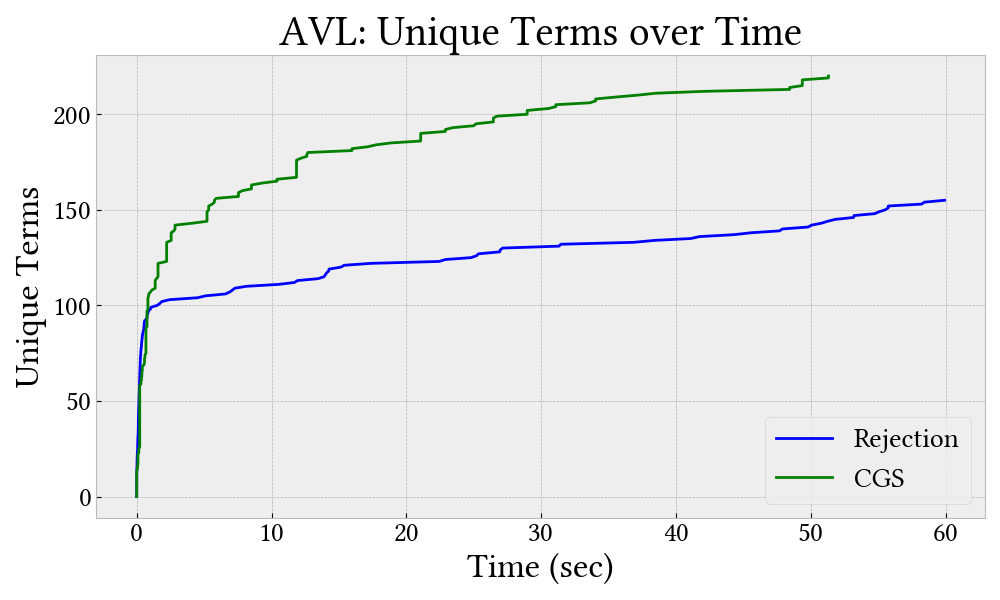}
  \includegraphics[width=.45\columnwidth]{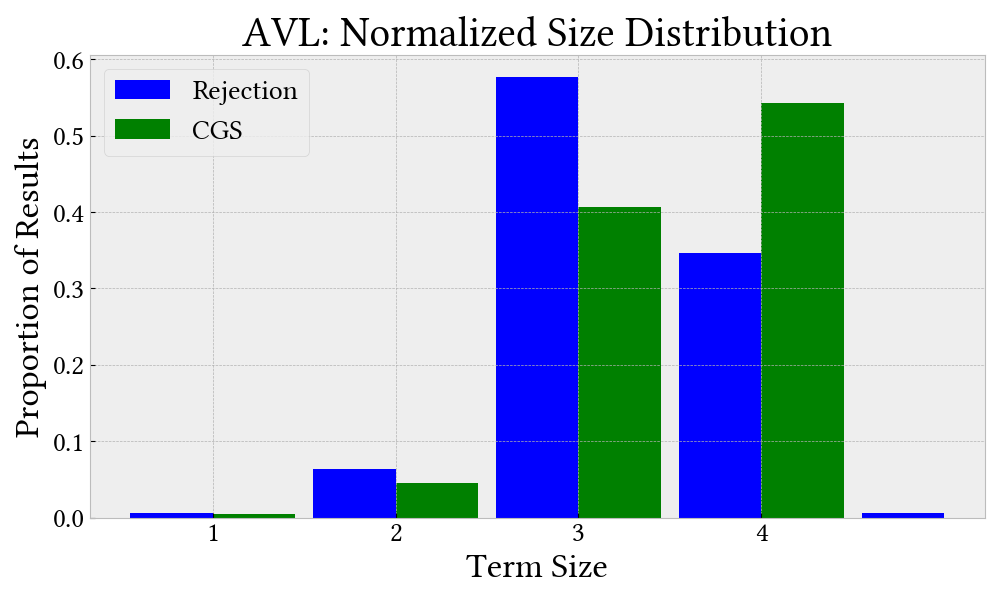}
  \caption*{\bench{AVL} Charts}
\end{figure}

\end{document}